\newcommand{\dnc}{D\&C }
\newcommand{\var}{\text{var}}
\def\T{{\top}}
\title{An Algorithm for Distributed Bayesian Inference in Generalized Linear Models}
\author[1]{N. D. Shyamalkumar \thanks{\url{shyamal-kumar@uiowa.edu}}}
\author[1]{Sanvesh Srivastava \thanks{\url{sanvesh-srivastava@uiowa.edu}}}
\affil[1]{Department of Statistics and Actuarial Science, The University of Iowa}
\date{\today}
\begin{document}

\maketitle

\begin{abstract}
  Monte Carlo algorithms, such as Markov chain Monte Carlo (MCMC) and Hamiltonian Monte Carlo (HMC), are routinely used for Bayesian inference in generalized linear models; however, these algorithms are prohibitively slow in massive data settings because they require multiple passes through the full data in every iteration. Addressing this problem, we develop a scalable extension of these algorithms using the divide-and-conquer (D\&C) technique that divides the data into a sufficiently large number of subsets, draws parameters in parallel on the subsets using a \textit{powered} likelihood, and produces Monte Carlo draws of the parameter by combining parameter draws obtained from each subset. These combined parameter draws play the role of draws from the original sampling algorithm. Our main contributions are two-fold. First, we demonstrate through diverse simulated and real data analyses that our distributed algorithm is comparable to the current state-of-the-art \dnc algorithm
  in terms of statistical accuracy and computational efficiency. Second, providing theoretical support for our empirical observations, we identify regularity assumptions under which the proposed algorithm leads to asymptotically optimal inference. We illustrate our methodology through normal linear and logistic regressions, where parts of our \dnc algorithm are analytically tractable. 
\end{abstract}
  
\noindent%
{\it Keywords:} Data augmentation; Distributed computing; Divide-and-conquer; Location-scatter family; Monte Carlo computations; Wasserstein distance.

\maketitle

\section{Introduction}

Generalized linear models (GLMs) are widely used for regression and classification tasks. There are a variety of Bayesian and frequentist approaches for fitting GLMs. Our focus is on posterior inference in Bayesian GLMs using Monte Carlo algorithms, such as MCMC and HMC \citep{Geletal14b}. These algorithm bypass asymptotic approximations and can be easily implemented using standard software; for example, Stan and JAGS \citep{Caretal17,Plu03}. Unfortunately, Monte Carlo algorithms are inefficient in applications involving massive data because they generate a latent random variable specific to every sample in every iteration. This has motivated a rich literature on distributed Bayesian inference for scaling existing sampling algorithms to massive data settings using the \dnc technique. We develop a distributed Bayesian approach for posterior inference in GLMs based on an approximation to the Wasserstein barycenter \citep{AguCar11,Srietal15}. Our algorithm scales well in massive data settings 
because it is employed in parallel on many smaller subsets and is easily implemented in practice. 

GLMs are extremely popular in a variety of fields \citep{GelHil07}. For example, logistic regression is widely used for regression problems with a binary or multiple-class responses, and negative binomial regression is extremely popular in applications where the responses are in the form of discrete counts \citep{Far16}. In modern applications, however, there has been an explosion in the number of observations; for example, it is common to collect millions of categorical responses daily in the form of ``likes'' on  Facebook, ``retweets'' on Twitter, and ratings on online movie databases and vendors. These applications motivate the development of automated methods for fitting GLMs that are easily implemented and leverage parallel computing. There are many optimization-based methods to accomplish the desired goal, but very few such methods exist under the Bayesian paradigm. 

The inefficiency of posterior computations in massive data settings has motivated significant interest in developing general and scalable Bayesian sampling algorithms. The literature is still developing, but three main techniques are at the forefront. The first relies on analytic approximations of the posterior, such as expectation propagation (EP), variational Bayes (VB), and Laplace approximation \citep{Rueetal09, Geletal14, TanNot14, Kucetal15,LeeWan16,Ragetal16}. These approximations underestimate posterior uncertainty unless proper care is exercised \citep{Giaetal18}, whereas MCMC and HMC are known to be accurate under general assumptions. The second technique uses subsampling or efficient approximations of transition kernels to avoid computational bottlenecks in MCMC or sequential Monte Carlo (SMC) algorithms \citep{WelTeh11,AhnKorWel12,KorCheWel13,Lanetal14,Shaetal14,MacAda15,Baretal15,Johetal15,Alqetal16,CamBro18,Quietal18,Quietal19}. These algorithms provide reliable posterior uncertainty estimates if their tuning parameters are chosen properly; therefore, Gibbs sampling and the HMC algorithm as implemented in Stan are more suited for automated applications because they are free of any proposal tuning.  

The third group of methods is based on the \dnc technique. This technique is not new, but its application for scalable Bayesian inference is recent. The methods in this group operate in three steps: randomly divide the full data into smaller subsets, run a modified form of an existing sampling algorithm in parallel on all the subsets, and combine the parameter draws from all the subsets. In the second step, the prior \citep{Scoetal16} or the likelihood \citep{Minetal14} is modified. Current methods ensure that the full data posterior and the probability distribution estimated in the combination step lead to the same inference in terms of parameter and uncertainty estimates \citep{NeiWanXin13,WanDun13,Wanetal15,Minetal17}. The Wasserstein Posterior (WASP) is one such method that modifies the likelihood and combines the posterior distributions estimated on the subsets through their barycenter \citep{Srietal15}. A criticism of the WASP is that its combination algorithm requires solving a computationally expensive linear program. When the interest lies in a one-dimensional functional of the parameters, then the Posterior Interval Estimation (PIE) algorithm circumvents this issue by exploiting the analytic form of the barycenter in terms of quantiles of the posterior distributions estimated on the subsets \citep{Lietal16,SavSri16,Guhetal17}. Another alternative is the Double Parallel Monte Carlo (DPMC) algorithm that approximates the full data posterior using a mixture distribution obtained by a common centering of the subset MCMC draws \citep{XueLia17}. 

The computation of the Wasserstein barycenter is greatly simplified if the posterior distribution belongs to a location-scatter family of distributions \citep{Alvetal16}. \citet{XuSri20} have developed this idea further for scalable inference in linear mixed-effects models using WASP. Their algorithm transforms parameter draws from all the  subsets into Monte Carlo draws from the WASP by a simple centering and scaling operation. This results is massive gains in computational efficiency while retaining the simplicity of DPMC algorithm. In this work, we show that this \dnc algorithm extends to a broader class of models, including GLMs. We identify regularity assumptions on the likelihood, subset size, and the number of subsets for which the proposed \dnc algorithm is theoretically valid and demonstrate numerically that the proposed \dnc algorithm has similar performance as the current state-of-the-art DPMC algorithm on simulated and a movie ratings data in terms of speed and accuracy. 

\section{Distributed Bayesian inference in GLMs}

\subsection{First step: Creation of data subsets}
\label{first-step-losp}

Consider a GLM based on the full data. Let $n$ be the sample size, $y_i$ be the $i$th response, $x_i = (x_{i1}, \ldots, x_{ip}) \in \RR^p$  be the vector of predictors for sample $i$ ($i=1, \ldots, n$), $y = (y_1, \ldots, y_n)^\T$ be the response, $X \in \RR^{n \times p}$ be the design matrix with $x^T_i$ as its $i$th row, and $\Dcal = \{y, X\}$ be the full data. Additionally, denote the regression coefficients in a GLM as $\beta = (\beta_1, \ldots, \beta_p) \in \RR^p$, $\theta$ is the vector of all parameters including $\beta$, and $u_i$ be the variables in addition to $x_i$ required to describe the distribution of $y_i$; for example, if $s_i$ is the number of trails for the $i$th sample  in Binomial regression, then $u_i = \{s_i\}$ and $\theta = \{\beta\}$. It is assumed that the samples $\{y_i, x^\T_i\}_{i=1}^n$ are independent and the $y_i$s follow the same distribution, which is a member of the exponential family. Let $\mu_i = \EE(y_i)$ be the mean of $y_i$ and $\eta_i = x_i^\T \beta$ be the $i$th linear predictor ($i=1, \ldots, n$). A \emph{link} function $g$ is chosen depending on the distribution of the responses, and a GLM posits
\begin{align}
  \label{eq:glm}
  \eta_i = g(\mu_i) = x_i^\T \beta, \quad y_i \overset{ind.}{\sim} F_{i\theta}, \quad i = 1, \ldots, n,
\end{align}
where $F_{i \theta}$ is the distribution function of $y_i$ parameterized in terms of $\mu_i = g^{-1}(x_i^\T \beta)$, $\theta$, and $u_i$. The last equation in \eqref{eq:glm} gives the likelihood of $\theta$, and it is combined with the prior density of $\theta$, $\pi(\theta)$, using Bayes rule to obtain the posterior density of $\theta$, $\pi(\theta \mid \Dcal)$.

The posterior density $\pi(\theta \mid \Dcal)$ is analytically intractable in most applications and Monte Carlo algorithms are used to drawing $\theta$ from  $\pi(\theta \mid \Dcal)$. This task is accomplished using a variety of sampling algorithms and is easily automated using robust software packages such as Stan or JAGS; see \citet{Geletal14b} for details. Unfortunately, most of these algorithm pass through the full data and  generate a latent variable specific to every sample depending on the $F_{i \theta}$ in \eqref{eq:glm}. This is time consuming in massive data applications. The \dnc approach solves this problem by running the sampling algorithm in parallel on smaller subsets created from the full data. 

The first step of the \dnc algorithm creates $k$ subsets from the full data by random subsampling. Let $\Dcal_{(j)}$ be the data on subset $j$, $m_j$ be the sample size of subset $j$, $\beta_{(j)}$, $\theta_{(j)}$, $y_{(j)}$, $X_{(j)}$, $y_{(j)i}$, $x_{(j)i}$, $u_{(j)i}$ be the equivalents of of their full data counterparts $\beta$, $\theta$, $y$, $X$, $y_i$, $x_i$, $u_i$, and $\Dcal = \cup_{j=1}^k D_{(j)}$. Similarly, the GLM in \eqref{eq:glm} on subset $j$ is modified to
\begin{align}
  \label{eq:glm-subj}
  \eta_{(j)i} = g(\mu_{(j)i}) = x_{(j)i}^\T \beta_{(j)}, \quad y_{(j)i} \overset{ind.}{\sim} F_{i \theta_{(j)}}, \quad i = 1, \ldots, m_j; \; j = 1, \ldots, k.  
\end{align}
The Monte Carlo algorithm for drawing $\theta_{(j)}$ cycles through $m_j$ samples in every iteration. If $k$ is chosen large enough such that $m_j \ll n$ for every $j$, then drawing $\theta$ in parallel on all the $k$ subsets is faster by a factor of $O(k)$ compared to the full data. On the other hand, a subset conditions on approximately $(1/k)$-fraction of the full data, so the $\theta_{(j)}$ draws overestimate posterior uncertainty compared to the $\theta$ draws from  $\pi(\theta \mid \Dcal)$. Addressing this mismatch between the posterior uncertainty estimates, the next section modifies the sampling algorithm on the subsets without compromising on its efficiency. 

\subsection{Second step: Modified sampling using a powered likelihood}
\label{second-step-losp}

The second step of our \dnc algorithm modifies the likelihood of $\theta$ before the application of sampling algorithm on any subset. Since $\Dcal_{(j)}$ has $(m_j/n)$-fraction of the samples in $\Dcal$, the asymptotic variance of the posterior distribution of $\theta$, which conditions on $\Dcal_{(j)}$, has an inflated variance by a factor of $n/m_j$ relative to that of $\pi(\theta \mid \Dcal)$. This problem is solved by raising the likelihood of $\theta$ on subset $j$ to the power of $n/m_j$, a strategy known as stochastic approximation \citep{Minetal14}. Let $\ell(\theta_{(j)})$ be the likelihood of $\theta_{(j)}$. The sampling algorithm treats $\{\ell(\theta_{(j)})\}^{n / m_j}$ as the pseudo likelihood on subset $j$ and defines the density of $\theta_{(j)}$ using Bayes rule as 
\begin{align}
  \label{eq:subsamp}
  \pi(\theta_{(j)} \mid \Dcal_{(j)}) = \frac{\{\ell(\theta_{(j)})\}^{n / m_j} \pi(\theta_{(j)})} {\int_{\RR^p} \ell(\theta_{(j)})\}^{n / m_j} \pi(\theta_{(j)}) d\theta_{(j)}}, \quad j = 1, \ldots, k,
\end{align}
where the prior is chosen such that $\int \ell(\theta_{(j)})\}^{n / m_j} p(\theta_{(j)}) d \theta_{(j)}$ is finite. Any sampling algorithm can be used to draw $\theta_{(j)}$ from the density in \eqref{eq:subsamp}, but we have used Stan to obtain posterior draws of $\theta_{(j)}$. We also provide two illustrative examples later in Section \ref{sec:illust-eg} where $\pi(\theta_{(j)} \mid \Dcal_{(j)})$ is analytically tractable.  

The posterior draws of $\theta_{(1)}, \ldots, \theta_{(k)}$ are obtained using \eqref{eq:subsamp} in parallel on the $k$ subsets. Let $T$ be the total number of post burn-in iterations on every subset, $\theta_j^{(t)}$ be the posterior draw of $\theta_{(j)}$ at the $t$th iteration, and $\Pi(\cdot \mid \Dcal_{(j)})$ be the posterior distribution of $\theta$ given $\Dcal_{(j)}$ with density $\pi(\theta_{(j)} \mid \Dcal_{(j)})$ in \eqref{eq:subsamp}. The distribution $\Pi(\cdot \mid \Dcal_{(j)})$ is called the $j$th \textit{subset posterior distribution}, $\theta_{(j)}^{(1)}, \ldots, \theta_{(j)}^{(T)}$ are called the $j$th \emph{subset posterior draws} of $\theta$, and $\theta_{(j)}^{(t)}$ is approximately distributed as $\Pi(\cdot \mid \Dcal_{(j)})$ for every $j$ and $t$ because the subset posterior draws are collected after convergence of sampling algorithm to the target distribution. The first two steps of subset samplers in the proposed \dnc algorithm {do overcome the hurdles in using} the original sampling algorithm in massive data settings while retaining its advantages. First, if we assume that each subset has sample size $m$, then subset posterior computations scale as $O(m)$, which is smaller than the complexity of original sampler by a factor of $k$. Second, the pseudo likelihood in \eqref{eq:subsamp} is a slight modification of the original likelihood, so the subset samplers are obtained using a simple modification of the original sampler.

Any existing method can be used to combine the subset posterior draws, but we focus on developing an approximation to the WASP because the linear program for its estimation has a computational complexity of $O(T^{5})$, which is prohibitively slow in practice. Our combination algorithm is based on the one in \citep{XuSri20} and is an approximation to the WASP. 

\subsection{Third Step: Combination of subset draws}

\begin{enumerate}[]
  \begin{algorithm}[t]
    \caption{The \dnc algorithm based on a WASP approximation.} \label{losp-sampler}
  \item {\bf INPUT} 
    \begin{enumerate}
    \item Subset posterior draws for $\theta$, $\theta_{(j)}^{(t)}$ ($j=1, \ldots, k$; $t = 1, \ldots, T$), and a known function of $\theta$, $f(\theta)$. 
    \item Mean vectors and covariance matrices of the subset posterior distributions and the approximate WASP posterior distribution, $\hat \mu_{(j)}$, $\hat {\overline \mu}$, $\hat \Sigma_{(j)}$, $\hat {\overline \Sigma}$ ($j=1, \ldots, k$). 
    \end{enumerate}
  \item {\bf DO}
    \begin{enumerate}
    \item Center and scale the subset posterior draws for $j = 1, \ldots, k$ and $t = 1, \ldots, T$ to define $\hat q_{(j)}^{(t)} = \hat \Sigma_{(j)}^{-1/2} (\theta_{(j)}^{(t)} - \hat \mu_{(j)})$.
    \item For $j = 1, \ldots, k$ and $t=1, \ldots, T$, define the $t'$th combined draw using the $t$th subset posterior draw of $\theta$ as       
      \begin{align}
        \label{eq:alg3}
        \overline \theta^{(t')} = \hat {\overline \mu} + \hat {\overline \Sigma}^{1/2}  \hat q_{(j)}^{(t)}, \quad t' = (j-1)T + t, 
      \end{align}
      where $\overline \theta^{(t')}$ is the approximate WASP draw for $\theta$.
    \item For $t' = 1, \ldots, kT$, define the $t'$th approximate WASP draw for $f(\theta)$ as  $f(\overline \theta^{(t')})$. 
    \end{enumerate}
  \item {\bf RETURN} \\ \vspace{5pt}
    $\overline \theta^{(1)}, \ldots, \overline \theta^{(kT)}$, $f(\overline \theta^{(1)}), \ldots, f(\overline \theta^{(kT)})$ as the approximate WASP draws. 
  \end{algorithm}
\end{enumerate}

We require two concepts from the theory of optimal transport for approximating the WASP. The first is that of the Wasserstein barycenter. Let $\| \cdot \|_2$ be the Euclidean metric, $\Pcal(\RR^p)$ be the set of all probability measures on $\RR^p$, $\Pcal_2(\RR^p)$ denote the Wasserstein space of order 2 given by $\{\nu \in \Pcal(\RR^p): \int \| \theta \|_2^2 \nu(d \theta) < \infty \}$, and the Wasserstein distance of order 2 between $\nu_1, \nu_2 \in \Pcal_2(\RR^p)$ given by $\underset{\pi \in \Lcal (\nu_1, \nu_2)} {\mathrm{inf}} \, \left(\int_{\RR^p \times \RR^p} \| x - y \|^2_2 \, d \pi(x, y)\right)^{1/2}$, where $\Lcal (\nu_1, \nu_2)$ is the set of all probability measures on $\RR^p \times \RR^p$ with marginals $\nu_1$ and $\nu_2$, be denoted by $W_2(\nu_1, \nu_2)$. Assume that $\nu_1, \ldots, \nu_k \in \Pcal_2(\RR^p)$, then their Wasserstein barycenter with weights $(w_1, \ldots, w_k)$ equals
\begin{align}
  \label{eq:wa1}
  \overline \nu = \underset{\nu \in \Pcal_2(\RR^p)}{\argmin} \, \sum_{j=1}^k {w_j} W_2^2(\nu,  \nu_j) , \quad \sum_{j=1}^k w_j = 1, \quad w_1, \ldots, w_k > 0,   
\end{align}
where $\overline \nu$ exists uniquely \citep{AguCar11}. 
In scalable Bayesian applications, $\Pi(\cdot \mid \Dcal_{(1)}), \ldots, \Pi(\cdot \mid \Dcal_{(k)})$ play the role of $\nu_1, \ldots, \nu_k$, respectively.  Their Wasserstein barycenter is the WASP, denoted as $\overline \Pi (\cdot \mid \Dcal)$, and replaces $\Pi(\cdot \mid \Dcal)$ for inference on $\theta$ \citep{Srietal15}. The optimization problem in \eqref{eq:wa1} is posed as a linear program in terms of empirical measures supported on subset posterior draws 
and efficient algorithms exist to obtain an empirical approximation of $\overline \Pi (\cdot \mid \Dcal)$ \citep{Lietal16,Staetal17}. We fix $w_j $ at $ 1/k$ and assume that $\Pi(\cdot \mid \Dcal_j) \in \Pcal_2(\RR^p)$ ($j=1, \ldots, k$), which implies that $\overline \Pi (\cdot \mid \Dcal) \in \Pcal_2(\RR^p)$.

The second concept is the location-scatter family of probability measures. It is defined as follows:
\begin{definition}[Location-scatter family; \citet{Alvetal16}]\label{loc-sca}
  Let $X_0$ be a random vector with probability law $G_0 \in \Pcal_2(\RR^p)$ such that $E(X_0) = 0$ and $\var(X_0) = I$, where $I$ is a $p \times p$ identity matrix, $\Lcal(W)$  be the probability distribution of a random variable $W$, and  $\Mcal^{p \times p}_{+}$ be the set of $p \times p $ positive definite matrices. The family $\Fcal(G_0) = \{\Lcal(\Sigma^{1/2} X_0 + \mu):  \Sigma \in \Mcal^{p \times p}_{+}, \mu \in \RR^p\}$ of probability laws induced by positive definite affine transformations from $G_0$ is called a location-scatter family, where $\Sigma^{1/2}$ is the symmetric square-root of $\Sigma$. 
\end{definition}
The family $\Fcal(G_0)$ contains distributions parameterized in terms of their mean $\mu$ and covariance matrix $\Sigma$; elliptical families are canonical examples. Theorem 4.2 in \citet{Alvetal16} implies that if $\nu_1, \ldots, \nu_k \in \Fcal(G_0)$ for some $G_0$ and $a_j$ and $B_j$ are the mean vector and covariance matrix of $\nu_j$ ($j=1, \ldots, k$), then their Wasserstein barycenter, denoted as $\overline \nu$, also belongs to $\Fcal(G_0)$ under general assumptions and its mean vector $\overline a = \frac{1}{k} \sum_{j=1}^k a_{j}$ and the covariance matrix $\overline B$ is the limit point of the sequence $\{S_{t}\}_{t=0}^{\infty}$  defined by 
\begin{align}
  \label{eq:wa21}
  S_{t+1} = S_t^{-1/2} \left\{ \sum_{j=1}^k (1/k) \left( S_t^{1/2} B_j S_t^{1/2} \right)^{1/2} \right\}^2 S_t^{-1/2}, \quad t=0, 1, 2, \ldots, \quad S_0 = I_p. 
\end{align}

The third step of our \dnc algorithm defines the mean vector and covariance matrix of the combined posterior distribution based on the results for the location-scatter family. Let $\mu_{(j)}$, $\overline \mu$ and $\Sigma_{(j)}$, $\overline \Sigma$ be the mean vectors and covariance matrices of $\Pi(\cdot \mid \Dcal_{(j)})$ and $\overline \Pi(\cdot \mid \Dcal)$, respectively. We define $\overline \mu = \frac{1}{k} \sum_{j=1}^k \mu_{(j)}$ and $\overline \Sigma$ as the limit of the sequence  $\{\overline \Sigma_{t}\}_{t \geq 0}$, the latter defined using a numerically stable version of \eqref{eq:wa21} (see \citet{XuSri20} for further details) as,
\begin{align}
  \label{eq:wa2}
  \overline \Sigma_{t+1} = \overline  \Sigma_t^{-1/2} \left\{ \frac{1}{k}\sum_{j=1}^k \left( \overline \Sigma_t\Sigma_{(j)} \right)^{1/2} \right\}
  \left\{ \frac{1}{k}\sum_{j=1}^k \left( \overline \Sigma_t \Sigma_{(j)} \right)^{1/2} \right\}^\T
  \overline  \Sigma_t^{-1/2}, \quad t=0, 1, 2, \ldots, \quad \overline \Sigma_0 = I_p . 
\end{align}
In practice, $\mu_j$s and $\Sigma_j$s are unknown, so our \dnc algorithm replaces them by their Monte Carlo estimates based on the subset posterior draws of $\theta$ as 
\begin{align}
  \label{eq:wa3}
  \hat \mu_{(j)} = \frac{1}{T}\sum_{t=1}^T \theta_{(j)}^{(t)}, \quad \hat \Sigma_{(j)} = \frac{1}{T}\sum_{t=1}^T (\theta_{(j)}^{(t)} - \hat \mu_{(j)}) (\theta_{(j)}^{(t)} - \hat \mu_{(j)})^\T, \quad 
  \hat{\overline \mu} = \frac{1}{k} \sum_{j=1}^k \hat \mu_{(j)}, \quad
  \hat {\overline \Sigma} = \hat {\overline \Sigma}_{\infty},
\end{align}
where $\{\hat {\overline \Sigma}_{t}\}_{t \geq 0}$ is obtained by replacing $\Sigma_{(j)}$ with $\hat \Sigma_{(j)}$ in \eqref{eq:wa2}. 

While there is no guarantee that $\Pi(\cdot \mid \Dcal_{(1)}), \ldots, \Pi(\cdot \mid \Dcal_{(k)})$ are members of a location-scatter family, our \dnc algorithm assumes this to be true and defines the mean vector and covariance matrix of the combined posterior as $\hat{\overline \mu}$ and $\hat {\overline \Sigma}$. This suggests a simple algorithm for transforming the subset posterior draws into draws from the combined posterior: (i) center and scale $j$th subset posterior draws as $\hat q_{(j)}^{(t)} = \hat {\Sigma}^{-1/2}_{(j)} (\theta_{(j)}^{(t)} - \hat \mu_{(j)}) $ and (ii) rescale and recenter $q_{(j)}^{(t)}$s to obtain draws following the combined posterior distribution as $\hat {\overline \mu} + \hat{\overline \Sigma}^{1/2} q^{(t)}_{(j)}$ for $t=1, \ldots, T$ and $j=1, \ldots, k$. This heuristic is summarized in Algorithm \ref{losp-sampler} and justified theoretically in the Section \ref{sec:theor-prop-losp}; therefore, our \dnc  algorithm provides an approximation to the WASP that reduces to the true WASP if the subset posterior distributions belong to a common location-scatter family.  

\section{Illustrative examples}
\label{sec:illust-eg}

Before discussing the theoretical properties of Algorithm \ref{losp-sampler}, we provide two illustrative examples from normal linear regression and logistic regression using Polya-Gamma data augmentation (PG-DA) \citep{Poletal13}. The subset posterior densities are analytically tractable in both examples, but Algorithm \ref{losp-sampler} is exact only in the first. For a simplified presentation, we assume that $m_1 = \cdots = m_k = m$ and $n = km$ in both examples; that is, we assume that the full data have been partitioned into disjoint subsets of equal sample size. 

\subsection{Normal linear regression}
\label{sec:linear-regression}

Consider normal linear regression model with the identity link function. Setting $g(\mu) = \mu$, $\theta = \{\beta, \sigma^2\}$, and $F_{i \theta}$ to be the Gaussian distribution with mean $x_i^\T \beta$ and variance $\sigma^2$ in \eqref{eq:glm}, we obtain that
\begin{align}
  \label{eq:mot1}
  y_i = x_i^\T \beta + \epsilon,  \epsilon \sim N(0, \sigma^2), \quad \pi(\beta, \sigma^2) \propto \sigma^{-2}, 
\end{align} 
where $u_i$ includes the variance of $\epsilon_i$ denoted as $\sigma^2$ and $(\beta, \sigma^2)$ are assigned an improper prior that maintains posterior propriety. The main interest lies in the posterior distribution of $\beta$ given $\Dcal$, which is
\begin{align}
  \label{eq:mot2}
  \beta \mid \Dcal \sim t_{n-p} \left\{ \hat \beta, s^2 (X^\T X)^{-1}  \right\} , \quad
  \hat \beta = \left( X^\T X \right)^{-1} X^\T y, \quad s^2 = \frac{\| y - \hat y \|_2^2}{n-p}, \quad \hat y = X \hat \beta,
\end{align}
where $n - p > 2$ and $t_{\nu}(a, A)$ is the multivariate $t$ distribution with $\nu$ degrees of freedom and ($a$, $A$) are the location and correlation parameters \citep{Geletal14b}. An application of this result in \eqref{eq:subsamp} implies that the subset posterior density of $\beta_{(j)}$ after stochastic approximation is 
\begin{align}
  \label{eq:mot2j}  
  \beta_{(j)} \mid \Dcal_{(j)} \sim t_{km-p} \left\{ \hat \beta_{(j)}, s_{(j)}^2 (X^\T_{(j)} X_{(j)})^{-1}  \right\}, \quad
  \hat \beta_{(j)} = \left( X^\T_{(j)} X_{(j)} \right)^{-1} X^\T_{(j)} y_{(j)},\quad s_{(j)}^2 = \frac{k\| y_{(j)} - \hat y_{(j)} \|_2^2}{km-p}, \quad \hat y_{(j)} = X_{(j)} \hat \beta_{(j)},
\end{align}
where the posterior distribution of $\beta_{(j)}$ given $\Dcal_{(j)}$ is called the $j$th subset posterior distribution. We have assumed that $km=n$, so the degrees of freedom of the full-data and subset posterior distributions of $\beta$ are $n-p$ and they differ only in their location and correlation parameters. 

The full-data and subset posterior distributions belong to the location scatter family specified by setting $G_0$ in Definition \ref{loc-sca} to be $t_{n-p}(0, \tfrac{n-p-2}{n-p}I_p)$. Let $\mu$, $\mu_{(j)}$ and $\Sigma$, $\Sigma_{(j)}$ be the means and covariance matrices of the full data and $j$th subset posterior distributions and $X_0 \sim G_0$. Then, $\beta \mid \Dcal$ in \eqref{eq:mot2} and $\beta_{(j)} \mid \Dcal_{(j)}$ in \eqref{eq:mot2j}, respectively, are represented in terms of $X_0$ as $\beta = \mu + \Sigma^{1/2} X_0$ and $\beta_{(j)} = \mu_{(j)} + \Sigma^{1/2}_{(j)} X_0$ \citet[Section 10]{NadKot05}; therefore, the WASP of the $k$ subset posterior distribution is $t_{n-p}(\overline \mu, \overline V)$, where $\overline \mu, \overline V$ satisfy
\begin{align}
  \label{eq:mot3}
  \overline \mu = \frac{1}{k} \sum_{j=1}^k \mu_{(j)}=\frac{1}{k} \sum_{j=1}^k \hat \beta_{(j)}, \quad 
  \overline V = \frac{1}{k} \sum_{j=1}^k  \left( \overline V^{1/2} V_{(j)} \overline V^{1/2} \right)^{1/2}, \quad V_{(j)} = s_{(j)}^2 (X^\T_{(j)} X_{(j)})^{-1}, 
\end{align}
and $\overline V$ is found using the fixed point algorithm in \eqref{eq:wa2}.

The analytic expressions of the subset posterior distributions in \eqref{eq:mot2j} enable comparisons with the full data posterior distribution under certain assumptions. Assume that $\beta_0, \sigma_0^2$ are the true parameter values in \eqref{eq:mot1}, $\{(x_i, y_i)\}_{i=1}^n$  are independent and identically distributed copies of $(x, y)$, $P_0$, $P_{y|x}$, and $P_{x}$ are the true distributions of $(x, y)$, $y$ given $x$, and $x$, $E_{0}$,  $E_{y|x}$,  $E_{x}$ are the expectations with respect to $P_{0}, P_{y|x}$, and $P_{x}$, and $V_x = E_x(x_1 x_1^\T)$ is non-singular. The residual error variance is an unbiased estimator of $\sigma_0^2$, so $E_0 (s^2) = \sigma_0^2$ and  $E_0(s_{(j)}^2)  = {k(m-p)} \sigma_0^2 / (km-p)$. The law of large numbers and Slutsky's theorem imply that
\begin{align}
  \label{eq:mot31}
  \frac{1}{s^2} \frac{X^\T X}{n} = \frac{V_x}{\sigma_0^2 } + o_n(1), \quad 
  \frac{1}{s^2_{(j)}} \frac{X^\T_{(j)} X_{(j)}}{m} = \frac{1 - o_n(1)}{1 - o_m(1)} \left\{ \frac{V_x}{\sigma_0^2 }+ o_m(1) \right\} = \frac{V_x}{\sigma_0^2 } + o_m(1), 
\end{align}
where $o_n(1)$ and $o_m(1)$ tend to 0 as $n$ and $m$ tend to infinity with $P_0$-probability 1; therefore, the conditions of Theorem 1 in \citet{Srietal18} are satisfied and $E_0 \{\sqrt{n} W_2(\beta, \overline \beta)\}^2 = o_m(1)$, where $\beta$ and $\overline \beta$ are random variables following the full-data posterior and WASP distributions. This implies that the 
WASP-based credible intervals for quantifying posterior uncertainty match with those obtained from the full data posterior distribution up to $o(1)$ terms in $P_0$-probability as $n \rightarrow \infty$; see Theorem 1 in \citet{Lietal16}.

The previous theoretical analysis suggests a simple scheme for posterior inference on $\beta$ in \eqref{eq:mot1} using Algorithm \ref{losp-sampler} when $n$ is large. Divide the $n$ samples randomly into $k$ subsets of almost equal size. Compute $ \hat \beta_{(j)}$ and $V_{(j)}$ in \eqref{eq:mot3} for $j=1, \ldots, k$ in parallel. Generate $q_1, \ldots, q_T$ independently from $t_{n-p}(0, I_p)$. Define the $t$th WASP draw as $\overline \mu + \overline V^{1/2} q_t$ ($t=1, \ldots, T$) and use them for posterior inference on $\beta$ instead of draws from the full data posterior distribution, where $\overline \mu$ and $\overline V$ are defined in \eqref{eq:mot3} and $\overline V^{1/2}$ is the symmetric square root of $\overline V$. This idea has motivated the development of the Location-Scatter WASP for linear-mixed effects model \citep{XuSri20}. Our goal in this work is to develop this idea more broadly with rigorous theoretical guarantees; see Section \ref{sec:theor-prop-losp}. 

\subsection{Logistic regression via Polya-Gamma data augmentation}
\label{log-reg}

Logistic regression is also a special case of \eqref{eq:glm}. Set $g(\mu) = \log \{\mu / (1 - \mu)\}$, $\theta = \beta$, and $F_{i \theta}$ to be the binomial distribution with mean $\mu_i:=s_ip_i$, where $p_i := (1 + e^{-x_i^\T \beta})^{-1}$ in \eqref{eq:glm}. The variable $u_i$ includes the number of trials $s_i$, $y_i$ follows $\text{Binom}(s_i, p_i)$ independently for $i=1, \ldots, n$, and $\beta$ is assigned the $N(\mu_{\beta}, \Sigma_{\beta})$ prior. The posterior distribution of $\beta$ is analytically intractable, but the P\'olya-Gamma Data Augmentation (PG-DA) strategy for logistic regression permits analytically tractable full conditional distributions \citep{Poletal13}. The PG sampler cycles between 
\begin{enumerate}[label={\arabic*},ref=\arabic*]
\item\label{s1} draw $\omega_i$ given $\beta$ and $\Dcal$ from $\text{PG}$($s_i$, $|x_i^T \beta|$) for $i = 1, \ldots, n$, where PG is the Polya-Gamma distribution; and
\item\label{s2} draw $\beta$ given $\omega_1, \ldots, \omega_n$, and  $\Dcal$ from $N$($m_{\omega}$, $V_{\omega}$), where $V_{\omega} = (X^T \Omega X + \Sigma_{\beta}^{-1})^{-1}$, $m_{\omega} = V_{\omega} (X^T \kappa + \Sigma_{\beta}^{-1} \mu_{\beta})$, $\kappa = (y_1 - s_1 / 2, \ldots, y_n - s_n / 2)$, and $\Omega$ is the diagonal matrix of $\omega_i$s. 
\end{enumerate}
The Markov chain $\Phi=\{\beta^{(t)}\}_{t=1}^{\infty}$ of the draws collected in step \ref{s2}, where $t$ indexes the iterations, has $\pi( \beta \mid \Dcal)$ as its invariant density \citep{ChoHob13}. The key idea in the PG-DA strategy is that the conditional density of $\beta$ given $\omega_1, \ldots, \omega_n$ and $y$ is
\begin{align}
  \label{eq:pg1}
  p(\beta \mid \omega_1, \ldots, \omega_n, y) \propto \prod_{i=1}^n p(y_i \mid \omega_i, \beta) p(\beta)  &\propto 
  \prod_{i=1}^n \exp \left\{ \kappa_i x_i^T \beta - \omega_i (x_i^T \beta)^2 /2 \right\} p(\beta) = \exp \left\{ - (z - X \beta)^T \Omega (z - X \beta) /2 \right\} p(\beta),
\end{align}
where $p(y_i \mid \omega_i)$ is the conditional density of $y_i$ given $\omega_i$, $\omega_i$ follows PG($s_i$, $|x_i^T \beta|$), $z = (\kappa_1 / \omega_1, \ldots, \kappa_n / \omega_n)$, $\Omega$ is defined in step \ref{s2}, and $p(\beta \mid y, \omega_1, \ldots, \omega_n) $ yields a conditionally Gaussian likelihood for $\beta$ with a working response $z$, design matrix $X$, and covariance matrix $\Omega^{-1}$. 

The subset posterior density in \eqref{eq:subsamp} for logistic regression is derived by modifying step \ref{s2} of the original PG sampler. The first step for generating $\omega_{(j)1}, \ldots, \omega_{(j)m}$ on subset $j$ is identical to step \ref{s1}: 1. draw $\omega_{(j)i}$ given $\beta_{(j)}$ and $\Dcal_{(j)}$ from $\text{PG}$($s_{(j)i}$, $x_{(j)i}^T \beta_{(j)}$) for $i = 1, \ldots, m$. 
Using \eqref{eq:pg1}, we have that 
\begin{align}
  \label{eq:pg2}
  \{p(\beta_{(j)} \mid y_{(j)}, \omega_{(j)1}, \ldots, \omega_{(j)m})\}^{n / m} \equiv \ell(\beta_{(j)} \mid y_{(j)}, \omega_{(j)1}, \ldots, \omega_{(j)m}) \propto e^{- (n / m)(z_{(j)} - X_{(j)} \beta_{(j)})^T \Omega_{(j)} (z_{(j)} - X_{(j)} \beta_{(j)}) /2} ,
\end{align}
where $\ell(\beta_{(j)} \mid y_{(j)}, \omega_{(j)1}, \ldots, \omega_{(j)m})$ is the stochastically-approximated conditionally Gaussian likelihood of $\beta_{(j)}$, its integral with respect to $p(\beta_{(j)})$ is finite, $z_{(j)} = (z_{(j)1}, \ldots, z_{(j)m})$, $z_{(j)i} = \kappa_{(j)i} / \omega_{(j)i}$, $\kappa_{(j)i} = y_{(j)i} - s_{(j)i}/2$, and $\Omega_{(j)}$ is the diagonal matrix of $w_{(j)i}$s. The $N(\mu_{\beta}, \Sigma_{\beta})$ prior on $\beta_{(j)}$ and the Bayes rule with the conditional likelihood in \eqref{eq:pg2} gives the equivalent  of step \ref{s2} in our \dnc algorithm on subset $j$: 2. draw $\beta_{(j)}$ given $\omega_{(j)1}, \ldots, \omega_{(j)m}$ and $\Dcal_{(j)}$ from $N$($m_{\omega_j}$, $V_{\omega_j}$), where $V_{\omega_j} = (\tfrac{n}{m}X_{(j)}^T \Omega_{(j)} X_{(j)} + \Sigma_{\beta}^{-1})^{-1}$, $m_{\omega_j} = V_{\omega_j} (\tfrac{n}{m} X_{(j)}^T \kappa_{(j)} + \Sigma_{\beta}^{-1} \mu_{\beta})$ and $\kappa_{(j)} = (\kappa_{(j)1}, \ldots, \kappa_{(j) m})$.

The full conditional distribution of $\beta$ after stochastic approximation is Gaussian on any subset. Unlike the previous example, this does not imply that the full-data and subset  posterior distributions of $\beta$ given $\Dcal$ belong to the same location-scatter family; therefore, the WASP is analytically intractable and the computationally expensive linear program must be employed for estimating the true WASP, which approximates the full-data posterior distribution. If we employ Algorithm \ref{losp-sampler} for combining draws of $\beta$ obtained using steps 1 and 2, then we approximate the true WASP using the barycenter of the approximations of subset posterior distributions based on a location-scatter family. This approximation to the true WASP delivers excellent performance in terms of approximating the full-data posterior distribution if the sample size on every subset is large enough to justify the Bernstein-von Mises theorem. The next section justifies this heuristic theoretically for a large class of likelihoods. 

\section{Theoretical Properties}
\label{sec:theor-prop-losp}

The previous section presented illustrative examples for linear and logistic regressions, but Algorithm \ref{losp-sampler} with a suitably replaced subset sampling scheme makes it applicable to a broad class of likelihoods. We show in this section that only the geometric ergodicity of the Monte Carlo algorithm is relevant in the theoretical analysis of Algorithm \ref{losp-sampler}. For these reasons, our theoretical analysis is stated in the setting of a broad class of likelihoods, with the geometric ergodicity of the subset samplers as a requirement. Henceforth, Algorithm \ref{losp-sampler} is to be understood in the above stated broader setting with a suitable subset sampling scheme. We start by stating the assumptions required for the theoretical validity of Algorithm \ref{losp-sampler}. In the following assumptions, $\theta_0$  is the true value of $\theta$ and $P^n_{\theta}$ is the joint distribution of the training data based a likelihood:
\begin{enumerate}
\item \label{ab} The $y_1, \ldots, y_n$ are independent and identically distributed as $P_{\theta_0}$. 
\item \label{a1} The subset posterior and full data posterior distributions belong to a location scatter family with $P_{\theta_0}^n$-probability 1. 
\item \label{xl1} The regularity assumptions of Laplace approximation hold. Let $B_{\delta}(\theta)$ denote an open ball of radius $\delta$ centered at $\theta$. Let the log likelihood of $\theta$ given $y_1, \ldots, y_n$ be $\ell_n(\theta)$, $\hat \theta_n$ be the maximum likelihood estimate of $\theta$, and $D^2 \ell_n(\theta)$ be its Hessian at $\theta$. Further, suppose that there exists positive numbers $\epsilon$, $M$, and $\eta$ and an integer $n_0$ such that for all $n \geq n_0$: (a) for every $\theta \in B_{\epsilon}(\hat \theta_n)$ and all $1 \leq j_1, \ldots, j_d \leq p$ with $1 \leq d \leq 6$, $|\partial_{j_1, \ldots, j_d} \ell_n(\theta) | < M$; (b) $\det \{D^2 \ell_n(\theta)\} > \eta$; and (c) for every $\delta$ satisfying $0 < \delta < \epsilon$, $ B_{\delta}(\hat \theta_n) \subseteq \RR^p $ and
  $$\underset{n \rightarrow \infty}{\limsup} \underset{\theta \in \RR^p \backslash B_{\delta}(\hat \theta_n) }{\sup} \{ \ell_n(\hat \theta_n) - \ell_n(\theta) \} < 0$$
\item \label{xl3} The number of subsets $k$ satisfies $k = O(1)$, and the subsets are disjoint and equal in size such that $km = n$, where we have assumed that $m=m_1 = \cdots = m_k$. 
\item \label{mc1} The number of iterations $T$ satisfies $n = o(T^{1/2})$ and $\sqrt{T} (\hat \mu_{(j)} - \mu_{(j)}) = O_Q(T^{-1/2})$ and $\sqrt{T} (\hat \Sigma_{(j)} - \Sigma_{(j)}) = O_Q(T^{-1/2})$ ($j=1, \ldots, k$), where $\hat \mu_{(j)}$ and $\hat \Sigma_{(j)}$ are defined in 
\eqref{eq:mcmceq} and 
 {$Q$ is the true joint probability measure on $\theta_{(1)}, \ldots, \theta_{(j)}$ draws defined in Theorem \ref{mc-err}}.
\end{enumerate}

Assumptions \ref{ab}--\ref{xl3} are commonly assumed in \dnc Bayesian inference and known to be satisfied if $P_{\theta_0}$ is a member of the exponential family; see Theorem 1 in \citet{XueLia17}. Assumption \ref{a1} is required for obtaining an analytic expression for the $W_2$-distance between the full data posterior distribution and the approximate WASP. Assumption \ref{xl1} is based on those required for the validity of the Laplace approximation for the full data and subset posterior distributions; see \citet{Kasetal90}. Our results generalize to cases where the subset sizes differ, but requiring a common subset sample size in Assumption \ref{xl3}  simplifies the analysis. We have also assumed that $k = O(1)$ for a simplified analysis, but this assumption can be relaxed using the theoretical setup in Theorem 1 of \citet{Lietal16}. Assumption \ref{mc1} is satisfied when the subset sampling scheme is geometrically ergodic; for example, Proposition 3.1 in \citet{ChoHob13} shows this for the PG-DA strategy. 

Let   
$\Pi$ be the full-data posterior, $\overline \Pi$ be the combined posterior in Algorithm \ref{losp-sampler} based on the WASP approximation, and $\mu$, $\overline \mu$ and $\Sigma$, $\overline \Sigma$ be the means and covariance matrices of $\Pi$, $\overline \Pi$. 
While $\Pi$ and $\overline \Pi$ are both analytically intractable, it is more efficient to obtain $\theta$ draws from $\overline \Pi$ using Algorithm \ref{losp-sampler} than from a general sampling scheme for $\Pi$. One source of error in using draws from Algorithm \ref{losp-sampler} for posterior inference on $\theta$ is statistical in nature, which arises from the use of $\overline \Pi$ instead of $\Pi$.
We quantify this error by $W_2(\Pi, \overline \Pi)$, which is independent of the subset sampling scheme. Algorithm \ref{losp-sampler} is motivated by the fact that if $\Pi$ and $\overline \Pi$ belong to the same location-scatter family, then 
\begin{align}
  \label{eq:t1e1}
  W^2_2(\Pi, \overline \Pi) = \| \mu - \overline \mu \|_2^2 + \tr \{\Sigma + \overline \Sigma - 2 (\overline \Sigma^{1/2} \Sigma\overline \Sigma^{1/2})^{1/2}\};
\end{align}
see Theorem 2.3 in \citet{Alvetal16}. For the convenience of theoretical analysis, we make this an assumption on $\Pi$ and the subset posterior distributions.  
If we show that the two terms on the right are $o(n^{-1})$ terms in $P^n_{\theta_0}$-probability, then 
$W_2(\Pi, \overline \Pi)$ is $o(n^{-1/2})$ in $P^n_{\theta_0}$-probability, implying that the statistical error decays to 0 at the parametric optimal $n^{-1/2}$ rate. The next theorem shows that this is indeed true.
\begin{theorem}\label{stat-ord}
  If Assumptions \ref{ab}--\ref{xl3} hold, then as $n, m \rightarrow \infty$
\begin{align*} 
  W_2(\Pi, \overline \Pi) = o(n^{-1/2}) \text{ in } P^n_{\theta_0}\text{-probability}.
\end{align*}  
\end{theorem}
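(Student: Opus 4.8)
The plan is to exploit the formula \eqref{eq:t1e1}, which by Assumption \ref{a1} is valid here, and reduce the claim to showing that both $\|\mu - \overline\mu\|_2 = o(n^{-1/2})$ and $\tr\{\Sigma + \overline\Sigma - 2(\overline\Sigma^{1/2}\Sigma\overline\Sigma^{1/2})^{1/2}\} = o(n^{-1})$, with $P^n_{\theta_0}$-probability tending to one. The common thread in controlling both quantities is the Laplace approximation (Assumption \ref{xl1}) applied simultaneously to the full-data posterior $\Pi$ and to each powered subset posterior $\Pi(\cdot\mid\Dcal_{(j)})$. First I would record, as a consequence of the Laplace approximation under Assumption \ref{xl1}, that $\mu = \hat\theta_n + O_{P}(n^{-1})$ and $\Sigma = \{-D^2\ell_n(\hat\theta_n)\}^{-1} + O_{P}(n^{-3/2})$, with analogous expansions for the subset quantities: since the powered likelihood on subset $j$ raises $\ell_{m}(\theta_{(j)})$ to the power $n/m = k$, its log-likelihood is $k\,\ell_{m,(j)}(\theta)$, whose MLE is still the subset MLE $\hat\theta_{(j)}$, and whose Hessian at the mode is $k\,D^2\ell_{m,(j)}(\hat\theta_{(j)})$. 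Hence $\mu_{(j)} = \hat\theta_{(j)} + O_{P}(m^{-1})$ and $\Sigma_{(j)} = \{-k\,D^2\ell_{m,(j)}(\hat\theta_{(j)})\}^{-1} + O_{P}(m^{-3/2})$.

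For the mean term: $\overline\mu = \tfrac1k\sum_j \mu_{(j)} = \tfrac1k\sum_j \hat\theta_{(j)} + O_P(m^{-1})$. Because the subsets are disjoint with $km=n$ (Assumption \ref{xl3}) and the data are i.i.d.\ (Assumption \ref{ab}), a Taylor expansion of each score equation shows $\hat\theta_{(j)} - \theta_0 = \{-D^2\ell_{m,(j)}(\theta_0)\}^{-1}\nabla\ell_{m,(j)}(\theta_0) + O_P(m^{-1})$ and likewise $\hat\theta_n - \theta_0 = \{-D^2\ell_{n}(\theta_0)\}^{-1}\nabla\ell_{n}(\theta_0) + O_P(n^{-1})$; since $\nabla\ell_n = \sum_j \nabla\ell_{m,(j)}$ and the per-subset Hessians all converge to the common Fisher information $I(\theta_0)$, averaging the subset expansions reproduces the full-data expansion up to $O_P(m^{-1}) = O_P(n^{-1})$. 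Therefore $\|\overline\mu - \mu\|_2 = O_P(n^{-1}) = o(n^{-1/2})$. This is essentially the argument underlying Theorem 1 of \citet{XueLia17}, which Assumption \ref{a1}'s surrounding discussion already invokes.

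For the covariance term: write $A = \overline\Sigma$, $B = \Sigma$; the trace functional $\tr\{A + B - 2(A^{1/2}BA^{1/2})^{1/2}\}$ is the squared Bures–Wasserstein distance between the centered Gaussians, and it is smooth and vanishes to second order when $A = B$, so it is bounded by $C\,\|A - B\|_F^2$ whenever $A, B$ lie in a fixed compact set of positive-definite matrices bounded away from singularity (which holds after rescaling by $n$, using Assumption \ref{xl1}(b) and the convergence of Hessians to $I(\theta_0)$). Thus it suffices to show $\|\overline\Sigma - \Sigma\|_F = o(n^{-1})$. Here I would argue: each $\Sigma_{(j)} = \tfrac1n\{-\tfrac1m D^2\ell_{m,(j)}(\hat\theta_{(j)})\}^{-1} + O_P(n^{-3/2})$ and $\Sigma = \tfrac1n\{-\tfrac1n D^2\ell_n(\hat\theta_n)\}^{-1} + O_P(n^{-3/2})$, and the normalized Hessians $-\tfrac1m D^2\ell_{m,(j)}(\hat\theta_{(j)})$ and $-\tfrac1n D^2\ell_n(\hat\theta_n)$ all equal $I(\theta_0) + o_P(1)$; more precisely their deviations from $I(\theta_0)$ are $O_P(m^{-1/2})$ and $O_P(n^{-1/2})$ respectively. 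So each $\Sigma_{(j)}$ and $\Sigma$ equal $\tfrac1n I(\theta_0)^{-1}$ up to $O_P(n^{-3/2})$. Then the fixed-point map \eqref{eq:wa2}, being a contraction near the common value $\tfrac1n I(\theta_0)^{-1}$ and having that point as an (approximate) fixed point when all $\Sigma_{(j)}$ coincide, yields $\overline\Sigma = \tfrac1n I(\theta_0)^{-1} + O_P(n^{-3/2})$ as well — one should check that the barycenter map is Lipschitz in the $\Sigma_{(j)}$'s with constant independent of $n$ after the $n$-rescaling. Combining, $\|\overline\Sigma - \Sigma\|_F = O_P(n^{-3/2}) = o(n^{-1})$, hence the trace term is $O_P(n^{-3}) = o(n^{-1})$, and \eqref{eq:t1e1} gives $W_2(\Pi,\overline\Pi) = o(n^{-1/2})$.

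The main obstacle I anticipate is making the covariance step fully rigorous: one must control the barycenter fixed-point iteration \eqref{eq:wa2} uniformly, showing that the map $(\Sigma_{(1)},\dots,\Sigma_{(k)}) \mapsto \overline\Sigma$ is smooth with derivatives bounded uniformly in $n$ once everything is rescaled by $n$, so that an $O_P(n^{-3/2})$ perturbation of the inputs around their common limit $\tfrac1n I(\theta_0)^{-1}$ produces only an $O_P(n^{-3/2})$ perturbation of $\overline\Sigma$ — and crucially that the discrepancy $\|\overline\Sigma - \Sigma\|_F$ is genuinely of smaller order than $n^{-1}$ rather than merely $O(n^{-1})$, which requires the leading $\tfrac1n I(\theta_0)^{-1}$ terms to cancel exactly (they do, because both the barycenter of identical measures and the full-data posterior share this leading term). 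The precise matrix-analytic bounds on the Bures–Wasserstein functional and on the square-root map (e.g.\ $\|A^{1/2}-B^{1/2}\|_F \le C\|A-B\|_F$ on a compact set bounded away from singularity) are standard but need to be assembled carefully; Assumption \ref{xl1}(b) is what keeps all the relevant Hessians uniformly non-degenerate so these Lipschitz constants exist.
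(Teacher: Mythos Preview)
Your overall strategy is correct and matches the paper's: both use Assumption~\ref{a1} to invoke \eqref{eq:t1e1}, both dispatch the mean term via the Laplace expansion and the argument of \citet{XueLia17}, and both reduce the covariance term to showing that $n\Sigma_{(j)}$, $n\Sigma$, and $n\overline\Sigma$ are all within $o_P(1)$ of $I_{\theta_0}^{-1}$. The difference is in how the covariance term is closed. You bound the Bures trace by $C\|n\overline\Sigma - n\Sigma\|_F^2$ (valid once everything lives in a fixed compact set of positive-definite matrices) and then control $\|n\overline\Sigma - n\Sigma\|_F$ by appealing to Lipschitz continuity of the barycenter map $(\Sigma_{(1)},\ldots,\Sigma_{(k)})\mapsto\overline\Sigma$; this is the step you flag as the main obstacle, and it does require some care. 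The paper sidesteps this obstacle entirely with a one-line variational inequality: since $\overline\Sigma$ minimizes $\sum_j d^2(\,\cdot\,,\Sigma_{(j)})$, the triangle inequality and barycenter optimality give $d^2(n\overline\Sigma,n\Sigma)\le \tfrac{4}{k}\sum_j d^2(n\Sigma_{(j)},n\Sigma)$, and then $d^2(A,B)\le\|A^{1/2}-B^{1/2}\|_F^2\le p\,\|A-B\|$ (operator monotonicity of $\sqrt{\,\cdot\,}$) yields $d^2(n\overline\Sigma,n\Sigma)\lesssim \tfrac{1}{k}\sum_j\|n\Sigma_{(j)}-n\Sigma\|_F = o_P(1)$ directly. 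So the paper never needs to analyze the fixed-point iteration \eqref{eq:wa2} or establish any smoothness of the barycenter map; the needed stability drops out of the variational definition. Your route works and even yields a sharper rate for the trace term, but it requires assembling the Lipschitz bounds you mention, whereas the paper's argument is essentially two inequalities.
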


In typical settings $\overline{\Pi}$ is not analytically tractable and one resorts to working with the empirical distribution constructed from the MCMC draws. Algorithm \ref{losp-sampler} provides an alternate way of arriving at an empirical distribution, which serves as an approximation to  the latter. We posit that the distribution between these two empirical distributions is a relevant measure of Monte Carlo error. Theorem \ref{mc-err}  below uses the rate of convergence of the Monte Carlo error to give a guidance on the choice of $T$. Note that these two empirical measures are random quantities, and hence in Theorem \ref{mc-err} we derive the asymptotic order for a certain coupled versions of these measures. We describe this coupling below. 

Based on Definition \ref{loc-sca}, let $G_0$ be the  distribution with mean ${0}$ and covariance matrix $I_p$ that defines the location-scatter family of subset posterior distributions, and  $\theta'_i$ ($i=1, \ldots, kT$) denote a $kT$ independent draws from $G_0$. If we assume that the subset posterior distributions belongs to the location-scatter family defined by $G_0$, then define $\theta_{(j)}^{(t)} = \mu_{(j)}+\Sigma_{(j)}^{1/2}\theta'_{(j-1)T+t}$ for $t=1,\ldots,T$ as a random sample of size $T$ from the $j$th subset posterior ($j=1,\ldots,k$), where $\theta'_i$s are unobserved and  $\theta_{(j)}^{(t)}$s are the $j$th subset MCMC draws. Denote the empirical means and the covariance matrices computed using $\theta_{(j)}^{(t)}$s as $\hat\mu_{(j)}$ and $\hat\Sigma_{(j)}$, respectively, and the uniform empirical measure supported on atoms
\begin{align}
  \label{eq:mcmceq}
  \hat{\overline{\mu}} + \hat{\overline{\Sigma}}^{\frac{1}{2}} \hat\Sigma_{(j)}^{-{1}/{2}}(\hat\mu_{(j)}-\hat{\mu}_{(j)})+\hat{\overline{\Sigma}}^{\frac{1}{2}} \hat\Sigma_{(j)}^{-1/2}\hat\Sigma_{(j)}^{\frac{1}{2}}\theta'_{(j-1)T+i}, \quad i=1,\ldots,T,
\end{align}
as $\hat\Pi_{(j)}$, where $\hat{\overline{\mu}}$ and $\hat{\overline{\Sigma}}$  are defined in \eqref{eq:wa3}. 
The Monte Carlo based approximation of $\overline \Pi$ using $kT$ MCMC draws resulting from Algorithm \ref{losp-sampler} is denoted as $\hat{ \overline \Pi}$,  and equals the uniform mixture of $\hat\Pi_{(j)}$, $j=1,\ldots,k$. Similarly, we define $\tilde{\overline{\Pi}}$ to be the uniform discrete distribution on the observations 
\[
{\overline{\mu}} + {\overline{\Sigma}}^{\frac{1}{2}}\theta'_{i}, \quad i=1,\ldots,kT,
\]
where ${\overline{\mu}} = \frac{1}{k}\sum\limits_{j=1}^k {\mu}_{(j)}$ and ${\overline{\Sigma}}$ is defined  in \eqref{eq:wa2}. Let $Q$ be the probability measure corresponding to a $n$ sample from $P_{\theta_0}$ and $\theta_i'$ ($i=1,\ldots,kT$). Then, the following theorem defines the Monte Carlo error as $W_2(\tilde{\overline \Pi}, \hat{\overline \Pi})$ and quantifies its rate of decay as $n,T$ tend to infinity.
\begin{theorem}\label{mc-err}
Let $\hat{\overline \Pi}$ and let $\tilde{\overline \Pi}$ be as defined above. Under Assumptions \ref{ab}--\ref{mc1}, $n=o(\sqrt{T})$ and $n \rightarrow \infty$, 
\begin{align*}
  W_2^2( \tilde{\overline \Pi}, \hat{\overline \Pi}) = o_{Q_{}}(n^{-1}).
\end{align*}
\end{theorem}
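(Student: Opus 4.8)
The plan is to exploit the fact that both $\tilde{\overline\Pi}$ and $\hat{\overline\Pi}$ are uniform empirical measures on $kT$ atoms indexed by the \emph{same} underlying draws $\theta'_1,\ldots,\theta'_{kT}$. Because of this shared indexing, there is a natural (non-optimal but admissible) coupling that matches the $i$th atom of one measure with the $i$th atom of the other; this gives the bound
\begin{align}
  \label{eq:mc-bound}
  W_2^2(\tilde{\overline\Pi}, \hat{\overline\Pi}) \;\le\; \frac{1}{kT}\sum_{j=1}^k \sum_{i=1}^T \Bigl\| \bigl(\overline\mu + \overline\Sigma^{1/2}\theta'_{(j-1)T+i}\bigr) - a_{(j)i} \Bigr\|_2^2,
\end{align}
where $a_{(j)i}$ is the atom in \eqref{eq:mcmceq}. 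So the whole theorem reduces to showing the right-hand side of \eqref{eq:mc-bound} is $o_Q(n^{-1})$. First I would simplify $a_{(j)i}$: the factor $\hat\Sigma_{(j)}^{-1/2}\hat\Sigma_{(j)}^{1/2}$ multiplying $\theta'_{(j-1)T+i}$ is identically $I_p$, and the term $\hat{\overline\Sigma}^{1/2}\hat\Sigma_{(j)}^{-1/2}(\hat\mu_{(j)} - \hat\mu_{(j)})$ is identically $0$ (both $\hat\mu_{(j)}$'s are the same quantity, so their difference vanishes). Hence $a_{(j)i} = \hat{\overline\mu} + \hat{\overline\Sigma}^{1/2}\theta'_{(j-1)T+i}$, and the summand in \eqref{eq:mc-bound} becomes $\bigl\| (\overline\mu - \hat{\overline\mu}) + (\overline\Sigma^{1/2} - \hat{\overline\Sigma}^{1/2})\theta'_{(j-1)T+i} \bigr\|_2^2$.

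Next I would bound this by $2\|\overline\mu - \hat{\overline\mu}\|_2^2 + 2\|\overline\Sigma^{1/2} - \hat{\overline\Sigma}^{1/2}\|_{\mathrm{op}}^2 \|\theta'_{(j-1)T+i}\|_2^2$ and average over $i,j$. The first term contributes $2\|\overline\mu - \hat{\overline\mu}\|_2^2$; the second contributes $2\|\overline\Sigma^{1/2} - \hat{\overline\Sigma}^{1/2}\|_{\mathrm{op}}^2 \cdot \frac{1}{kT}\sum \|\theta'_i\|_2^2$. Since $\theta'_i$ are i.i.d.\ with mean $0$ and covariance $I_p$ under $Q$, the empirical second moment $\frac{1}{kT}\sum\|\theta'_i\|_2^2 = p + o_Q(1)$ by the law of large numbers (here $kT \to \infty$ since $T\to\infty$ and $k=O(1)$). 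So it remains to show $\|\overline\mu - \hat{\overline\mu}\|_2^2 = o_Q(n^{-1})$ and $\|\overline\Sigma^{1/2} - \hat{\overline\Sigma}^{1/2}\|_{\mathrm{op}}^2 = o_Q(n^{-1})$. For the mean, $\hat{\overline\mu} - \overline\mu = \frac1k\sum_j(\hat\mu_{(j)} - \mu_{(j)})$, and Assumption \ref{mc1} gives $\hat\mu_{(j)} - \mu_{(j)} = O_Q(T^{-1/2})$ (reading the assumption's $O_Q(T^{-1/2})$ scaling literally), so $\|\overline\mu-\hat{\overline\mu}\|_2^2 = O_Q(T^{-1})$; combined with $n = o(\sqrt T)$, i.e.\ $T^{-1} = o(n^{-2}) = o(n^{-1})$, this is $o_Q(n^{-1})$. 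The covariance term is handled by first invoking Assumption \ref{mc1} to get $\hat\Sigma_{(j)} - \Sigma_{(j)} = O_Q(T^{-1/2})$, then propagating this perturbation through the fixed-point map \eqref{eq:wa2} that defines $\overline\Sigma$ (resp.\ $\hat{\overline\Sigma}$) from the $\Sigma_{(j)}$'s (resp.\ $\hat\Sigma_{(j)}$'s), and finally through the matrix square root. Both the barycenter fixed-point map and $A\mapsto A^{1/2}$ are locally Lipschitz on the set of positive-definite matrices with eigenvalues bounded away from $0$ and $\infty$ — a regime guaranteed here because, by the Laplace/Bernstein–von Mises regularity in Assumption \ref{xl1} together with Assumption \ref{a1}, each $\Sigma_{(j)}$ is $\asymp n^{-1}$ times a well-conditioned matrix (one may rescale by $n$ to work with order-one matrices, then rescale back, which only changes constants). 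This yields $\|\overline\Sigma^{1/2} - \hat{\overline\Sigma}^{1/2}\|_{\mathrm{op}} = O_Q(T^{-1/2})$ as well, hence $o_Q(n^{-1/2})$ by $n = o(\sqrt T)$, and its square is $o_Q(n^{-1})$. Assembling the three pieces in \eqref{eq:mc-bound} gives $W_2^2(\tilde{\overline\Pi},\hat{\overline\Pi}) = o_Q(n^{-1})$.

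The main obstacle is the covariance step: establishing that the perturbation $\hat\Sigma_{(j)} - \Sigma_{(j)}$ propagates through the barycenter fixed-point iteration \eqref{eq:wa2} with only a constant-factor blow-up. This requires (i) showing the fixed point depends Lipschitz-continuously on the input matrices $\{\Sigma_{(j)}\}$ uniformly over a neighborhood, which follows from the contraction/differentiability theory for the Bures–Wasserstein barycenter map (the relevant operator has spectral radius $<1$ at the fixed point under the positive-definiteness in Assumption \ref{a1}), and (ii) controlling the square-root map, where the modulus of continuity degrades near a singular matrix — handled by the uniform conditioning coming from Assumption \ref{xl1}. A secondary subtlety is that the $O_Q(\cdot)$ bounds in Assumption \ref{mc1} are stated under the measure $Q$ that jointly governs the data and the auxiliary draws $\theta'_i$, so all the "$o_Q$/$O_Q$" manipulations above must be interpreted consistently under this single probability measure; since the auxiliary draws enter \eqref{eq:mc-bound} only through the benign empirical-moment factor $\frac1{kT}\sum\|\theta'_i\|_2^2$, this causes no real difficulty, but it should be stated carefully.
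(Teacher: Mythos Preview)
Your overall strategy---couple atom to atom, then bound the resulting mean and matrix discrepancies via LLN and operator-norm estimates---is exactly what the paper does. The gap is at the very first step, where you simplify the atoms in \eqref{eq:mcmceq}. That display contains two typos: the middle term should read $(\mu_{(j)}-\hat\mu_{(j)})$, not $(\hat\mu_{(j)}-\hat\mu_{(j)})$, and the trailing factor should be $\Sigma_{(j)}^{1/2}$, not $\hat\Sigma_{(j)}^{1/2}$. This is forced by the construction just above \eqref{eq:mcmceq}: the subset draws are $\theta_{(j)}^{(t)}=\mu_{(j)}+\Sigma_{(j)}^{1/2}\theta'_{(j-1)T+t}$, and Algorithm~\ref{losp-sampler} centers and scales them by the \emph{empirical} $\hat\mu_{(j)},\hat\Sigma_{(j)}$, so the resulting atom is $\hat{\overline\mu}+\hat{\overline\Sigma}^{1/2}\hat\Sigma_{(j)}^{-1/2}(\mu_{(j)}-\hat\mu_{(j)})+\hat{\overline\Sigma}^{1/2}\hat\Sigma_{(j)}^{-1/2}\Sigma_{(j)}^{1/2}\theta'_{(j-1)T+t}$. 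Your cancellations $\hat\mu_{(j)}-\hat\mu_{(j)}=0$ and $\hat\Sigma_{(j)}^{-1/2}\hat\Sigma_{(j)}^{1/2}=I_p$ therefore do not occur, and the quantities you must actually control are
\[
a_j=\hat{\overline\mu}-\overline\mu+\hat{\overline\Sigma}^{1/2}\hat\Sigma_{(j)}^{-1/2}(\mu_{(j)}-\hat\mu_{(j)}),\qquad
B_j=\hat{\overline\Sigma}^{1/2}\hat\Sigma_{(j)}^{-1/2}\Sigma_{(j)}^{1/2}-\overline\Sigma^{1/2},
\]
which is precisely the paper's decomposition. Your argument for the $\hat{\overline\mu}-\overline\mu$ piece survives, but you now also need $\|\hat{\overline\Sigma}^{1/2}\hat\Sigma_{(j)}^{-1/2}\|=O_Q(1)$ to dispose of the extra mean term, and $B_j$ acquires the cross-factor $\hat\Sigma_{(j)}^{-1/2}\Sigma_{(j)}^{1/2}$ that your bound on $\|\overline\Sigma^{1/2}-\hat{\overline\Sigma}^{1/2}\|$ alone does not cover.

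On the covariance side, your plan (local Lipschitz continuity of the barycenter fixed point and of $A\mapsto A^{1/2}$, after rescaling by $n$) is workable in principle, but the paper takes a more self-contained route: it inserts $I_{\theta_0}^{-1/2}$ and bounds $\|(n\overline\Sigma)^{1/2}-I_{\theta_0}^{-1/2}\|$ and $\|(n\hat{\overline\Sigma})^{1/2}-I_{\theta_0}^{-1/2}\|$ directly via an elementary barycenter inequality (Lemma~\ref{slemma}: $d(\bar A,A_0)^2\le\frac{4}{K}\sum_k d^2(A_0,A_k)$, obtained just from optimality of $\bar A$) together with operator-monotonicity of $\sqrt{\cdot}$, and handles $\|\hat\Sigma_{(j)}^{-1/2}\Sigma_{(j)}^{1/2}-I\|$ from Assumption~\ref{mc1}. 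This avoids having to invoke contraction/differentiability theory for the Bures--Wasserstein barycenter map, which you correctly flag as the hardest part of your approach.
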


\section{Experiments}

\subsection{Setup}
\label{sim-setup}

We evaluate the performance of Algorithm \ref{losp-sampler} using DPMC as the benchmark for distributed Bayesian inference. The efficiency and accuracy of Algorithm \ref{losp-sampler} and DPMC are evaluated relative to the full data posterior distribution. We use Stan for obtaining parameter draws from the full data and subset posterior distributions \citep{Caretal17}. Our simulated and real data analyses focus on three GLMs: logistic, negative binomial, and multinomial-logistic regressions. All three models are supported by default in Stan. The draws from the subset posterior density in \eqref{eq:subsamp} is obtained using the user-defined probability functions feature of Stan \citep[Section 18.5]{Sta20}. Every sampling algorithm in our simulated and real data analyses runs for 10,000, and we discard the first 5000 draws as burn-ins and thin the chain by collecting every fifth draw. The convergence of the chains is assessed using trace plots. We collect all subset posterior draws of parameters and combine them using Algorithm \ref{losp-sampler} and DPMC. 

We use two metrics for comparing the performance of  Algorithm \ref{losp-sampler} and DPMC. Let $\hat \Pi$ be the MCMC-based approximation of the full data posterior using Stan and $\check \Pi$ be the approximation of $\hat \Pi$ obtained using Algorithm \ref{losp-sampler} or DPMC. Motivated from \eqref{eq:t1e1}, the first metric quantifies the approximation error in using $\check \Pi$ instead of $\hat \Pi$ for inference on parameters as
\begin{align}
  \label{eq:sim1}
  \text{Approximation Error} =  \left[ \| \hat \mu - \check \mu \|^2_2 + \tr \left\{ \hat \Sigma + \check \Sigma - 2 \left( \hat \Sigma^{1/2} \check \Sigma \hat \Sigma^{1/2} \right)\right\}  \right]^{1/2},  
\end{align}
where $\hat \mu$, $\check \mu$ and $\hat \Sigma$, $\check \Sigma$ are the means and covariance matrices of $\hat \Pi$ and $\check \Pi$. The approximation error in \eqref{eq:sim1} is small when the differences between the posterior means and covariance matrices of $\hat \Pi$ and $ \check \Pi$ are small.  The second metric measures the computational gain in using Algorithm \ref{losp-sampler} or DPMC. Let $\hat t$, $\check t$ be the wall-clock run times for obtaining draws from $\hat \Pi$, $\check \Pi$, then the computational gain is defined as $\hat t / \check t$. 

\subsection{Simulated Data Analysis}
\label{sim-set1}

Our simulated data analyses includes logistic and negative binomial regressions. Following the notation in Section \ref{log-reg}, we set $s_i$ to be 15 for every sample, simulate the entries of $X$ as independently from $N(0, 1)$, and set the entries of $\beta$ to alternate between $-2$ and $2$. We vary $n$ as $10^4, 10^5$, vary $p$ as $10, 20$, and simulate $y_i$ as $\text{Binom}\{s_i, 1 / (1 + e^{-x_i^\top \beta})\}$ for every combination of $n$ and $p$. For negative binomial regression, the setup for simulating $X$ and the values of $n,p$ are the same but the $\beta$ entries are set to $-1$ and 1 alternately. The observation $y_i$ is simulated from a negative binomial distribution with  mean parameter $e^{x_i^T \beta}$ and $\phi = 2$ as the overdispersion parameter. This simulation setup is replicated 10 times. The posterior draws of $\beta$ in logistic regression and $\beta, \phi$ in negative binomial regression conditioned on the full data are obtained in every replication using Stan. 

We obtain subset posterior draws for $k=20$ and 50 for DPMC and the proposed method. First, we randomly partition the samples into $k$ disjoint subsets for every $n$ and $p$.  The values of $k$ are small relative to $m$ when $n=10^4$ or $n=10^5$, which satisfies conditions in Theorem \ref{stat-ord}. The subset posterior samplers use Stan to draw $\beta$ in logistic regression and $\beta, \phi$ in negative binomial regression on all the $k$ subsets in parallel from the modified density in \eqref{eq:subsamp}. The post burn-in $\beta$ and $\phi$ draws are collected from the $k$ subsets and aggregated using Algorithm \ref{losp-sampler} and DPMC. 

The approximation errors of the proposed method and DPMC are very similar across all simulation settings (Tables \ref{tbl1}). For both choices of $p$ and $k$, the accuracy of DPMC and proposed method increases with $n$. The accuracy is insensitive to the choice $k$ when $n=10^5$ and decreases slightly moving $k=20$ to $k=50$ when $n=10^4$. The decrease in accuracy happens because the subset sample sizes are much smaller relative to $p$ when $n = 10^4$ and it results in relatively less accurate estimation of subset posterior densities compared to the case when $n=10^5$. Furthermore, the subset sample size decreases as $k$ changes from 20 to 50, but the decrease is more severe when $n=10^4$, resulting in slightly lower accuracy for the $k=50$ case; however, across all simulation settings, accuracy of the proposed method and DPMC agree closely. 

The computational gains of the proposed method and DPMC are very similar across all simulation settings (Tables \ref{tbl2}). When $n=10^4$ and $k=20$ in negative binomial regression, the subset sample size is only slightly smaller than $n$ and the time required for subset and full data posterior computations are very similar. Due to the  extra time spent in combining the samples, there is no gain in efficiency using DPMC or  the proposed method. Except this setting, for every choice of $n$, $p$, and $k$, the computation gains for DPMC and the proposed method are significant. Unlike the statistical accuracy, the computational gains are insensitive to the choice of $m$, $p$, or $k$. We observe that the computational gains increase with $n$, showing the practical advantages of DPMC and the proposed method in massive data settings. We conclude from this simulation study that the proposed method offers similar accuracy and computational gains as DPMC.

\begin{table}[ht]
  \centering
  \begin{tabular}{|r|cc|cc|cc|cc|}
    \hline
    & \multicolumn{8}{c|} {Logistic Regression} \\
    \hline
    & \multicolumn{4}{c|}{$k=20$} & \multicolumn{4}{c|}{$k=50$} \\
    \hline
    & \multicolumn{2}{c|}{$p=10$} & \multicolumn{2}{c|}{$p=20$}& \multicolumn{2}{c|}{$p=10$} & \multicolumn{2}{c|}{$p=20$}\\
    \hline
    & Proposed & DPMC & Proposed & DPMC & Proposed & DPMC & Proposed & DPMC \\ 
    \hline
    $n = 10^4$ & 0.0535 & 0.0535 & 0.3045 & 0.3049 & 0.1206 & 0.1206 & 0.5022 & 0.5027 \\ 
    \hline
    $n = 10^5$ & 0.0457 & 0.0457 & 0.2540 & 0.2543 & 0.0885 & 0.0885 & 0.3678 & 0.3681 \\ 
    \hline
    & \multicolumn{8}{c|}{Negative Binomial Regression} \\
    \hline
    & \multicolumn{4}{c|}{$k=20$} & \multicolumn{4}{c|}{$k=50$} \\
    \hline
    & \multicolumn{2}{c|}{$p=10$} & \multicolumn{2}{c|}{$p=20$}& \multicolumn{2}{c|}{$p=10$} & \multicolumn{2}{c|}{$p=20$}\\
    \hline
    & Proposed & DPMC & Proposed & DPMC & Proposed & DPMC & Proposed & DPMC \\ 
    \hline
    $n = 10^4$ & 0.1440 & 0.1440 & 0.2029 & 0.2029 & 0.3646 & 0.3647 & 0.6389 & 0.6390 \\ 
    \hline
    $n = 10^5$ & 0.0153 & 0.0153 & 0.0225 & 0.0226 & 0.0273 & 0.0273 & 0.0415 & 0.0415 \\ 
    \hline
  \end{tabular}
\caption{Approximation errors in logistic and negative binomial regressions for the different choices of $n, k$, and $p$.}
\label{tbl1}
\end{table}

\begin{table}[ht]
  \centering
  \begin{tabular}{|r|cc|cc|cc|cc|}
    \hline
    & \multicolumn{8}{c|} {Logistic Regression} \\
    \hline
    & \multicolumn{4}{c|}{$k=20$} & \multicolumn{4}{c|}{$k=50$} \\
    \hline
    & \multicolumn{2}{c|}{$p=10$} & \multicolumn{2}{c|}{$p=20$}& \multicolumn{2}{c|}{$p=10$} & \multicolumn{2}{c|}{$p=20$}\\
    \hline
    & Proposed & DPMC & Proposed & DPMC & Proposed & DPMC & Proposed & DPMC \\ 
    \hline
    $n = 10^4$ &  5.2924 &  5.2946 &  6.9790 &  6.9854 &  6.2118 &  6.2190 &  8.1364 &  8.1554 \\ 
    \hline
    $n = 10^5$ & 17.0183 & 17.0211 & 28.8525 & 28.8578 & 37.2964 & 37.3224 & 72.3282 & 72.3988 \\ 
    \hline
    \hline
    & \multicolumn{8}{c|}{Negative Binomial Regression} \\
    \hline
    & \multicolumn{4}{c|}{$k=20$} & \multicolumn{4}{c|}{$k=50$} \\
    \hline
    & \multicolumn{2}{c|}{$p=10$} & \multicolumn{2}{c|}{$p=20$}& \multicolumn{2}{c|}{$p=10$} & \multicolumn{2}{c|}{$p=20$}\\
    \hline
    & Proposed & DPMC & Proposed & DPMC & Proposed & DPMC & Proposed & DPMC \\ 
    \hline
    $n = 10^4$ & 0.9036 &  0.9088 &  1.2580 &  1.2671 &  1.6447 &  1.6512 &  2.2530 &  2.2685 \\ 
    \hline
    $n = 10^5$ & 8.0848 &  8.1086 & 13.0977 & 13.1485 & 14.5329 & 14.7139 & 26.5668 & 26.9506 \\ 
    \hline    
  \end{tabular}
\caption{Computational gains in logistic and negative binomial regressions for the different choices of $n, k$, and $p$.}
\label{tbl2}
\end{table}

\subsection{MovieLens Data Analysis}
\label{sim-set2}

We use MovieLens ratings data with 1 million ratings to illustrate the application of Algorithm \ref{losp-sampler}. This data contains the ratings for about 65 thousand movies from about 72 thousand users of the online movie ratings database named MovieLens. Every record in the database contains information about the user, movie name, rating of the movie by the user ranging from 0.5 to 5 in the increments of 0.5, time of the rating, and genres of the movie rated, where a movie can belong to one or more of the 17 predefined genres. The response is defined to be integers from 1 to 5, where the fractional ratings are rounded up. 

We have also added predictors capturing popularity of movie, the mood of the user, and mapped the 17 genres to 4 movie categories following \citet{Per17}. The four movie categories represent action, children, comedy, and drama genres, and they are represented using dummy variables, with action category being the baseline. The popularity of a movie is defined as logit $\{(n_{\text{lik}} + 0.5) / (n_{\text{rat}} + 1.0)\}$, where logit($x$)$= \log \{x / (1-x)\}$ and $n_{\text{lik}}$ and $n_{\text{rat}}$ are the number of users who rated the movie above 3 and who rated the movie in 30 or fewer most recent observations for the movie. The user's mood equals 1 if the previous movie rating assigned by the user is above 3. Finally, we evaluate the performance of the proposed method and DPMC over ten replications, where each replication has $10^5$ sample size and the full data are divided randomly into 50 subsets. 

We fit multinomial logistic regression to this data due to the five levels of the response. We set the observed rating 5 as the baseline and all the regression coefficients for this response as 0. If the observed rating is $j$ ($j=1, \ldots, 4$), then denote the regression coefficients for the intercept, the three movie category dummy variables with the action category as the baseline, movie popularity, and user's mood as $\beta_j = (\beta_{j1}, \ldots, \beta_{j6})^\top$.  The multinomial logistic regression assumes that 
\begin{align}
  \label{eq:multi-logit}
  \log \frac{\text{Pr} (y_i = j)}{\text{Pr} (y_i = 5)} = x_i^\top \beta_j, \quad j = 1, \ldots, 4, \quad i = 1, \ldots, n. 
\end{align}
This model is supported by Stan, and we use it to obtain posterior draws of $\beta_1, \beta_2, \beta_3, \beta_4$ for the full data. To apply DPMC and the proposed method, the full data are randomly partitioned into 50 subsets and Stan is used to draw parameters from the subset posterior density defined in \eqref{eq:subsamp}. 
The subset posterior draws are combined using DPMC and Algorithm \ref{losp-sampler}. 

Agreeing with the simulation results, DPMC and the proposed method have similar approximation errors and computational gains in inference on $\beta_1, \beta_2, \beta_3, \beta_4$ (Table \ref{tab3}). The approximation errors are small for both methods and matches our theoretical result in \ref{stat-ord}. Similarly, the compuational gains are also $O(k)$ for both methods as predicted by our theory. We conclude that the proposed method is a promising alternative to DPMC for \dnc inference in massive data settings and it offers comparable accuracy and computational gains.  

\begin{table}[ht]
  \centering
{\begin{tabular}{|r|cc|cc|cc|}
    \hline
    & \multicolumn{2}{c|}{Approximation Error} & \multicolumn{2}{c|}{Computational Gain} \\ 
    \hline
    & Proposed & DPMC & Proposed & DPMC \\ 
    \hline
    $\beta_1$ &  0.0678 &  0.0679 & 24.9129 & 24.9560 \\ 
    $\beta_2$ &  0.0341 &  0.0341 & 24.9057 & 24.9560 \\ 
    $\beta_3$ &  0.0158 &  0.0158 & 24.9070 & 24.9559 \\ 
    $\beta_4$ &  0.0157 &  0.0157 & 24.9232 & 24.9561 \\ 
    \hline
  \end{tabular}
}%
\caption{Approximation errors and computational gains in MovieLens data analysis.}
\label{tab3}
\end{table}

\section{Discussion}

We have presented an algorithm for computing an approximation to the WASP based on a location-scatter family. Our simulations in Section \ref{sim-set1} show that if $m$ is large relative to $k$ and $p$, then the approximate algorithm can be used for scalable and accurate binomial and negative binomial regressions in massive data settings. We expect that these advantages carry over to models with random effects; therefore, it is interesting to investigate the analogues of Theorems \ref{stat-ord} and \ref{mc-err} in such models.

\section*{Acknowledgement}
Sanvesh Srivastava's research is partially supported by grants from the Office of Naval Research (ONR-BAA N000141812741) and the National Science Foundation (DMS-1854667/1854662). The R and Stan codes used in our simulated and real data analyses are available at \texttt{https://github.com/blayes/location-scatter-wasp}.

\clearpage

\renewcommand\thesection{\arabic{section}}
\renewcommand\thesubsection{\thesection.\arabic{subsection}}
\renewcommand\thesubsubsection{\thesubsection.\arabic{subsubsection}}

\setcounter{section}{0}

\begin{center}
\textbf{\LARGE Supplementary Material for An Algorithm for Distributed Bayesian Inference in Generalized Linear Models}
\end{center}

\section{Proof of Theorem 1}
\label{sec:appendix2}

Assumption 2 implies that $\overline \Pi(\cdot \mid \Dcal)$ and  $\Pi(\cdot \mid \Dcal)$ belong to the same location-scatter family, and using Theorem 2.3 in \cite{Alvetal16} we have that 
\begin{align}
  \label{eq:s1e1}
  W^2_2\{\Pi(\cdot \mid \Dcal), \overline \Pi(\cdot \mid \Dcal) \} = \|\mu - \overline \mu  \|_2^2 +
   \tr\{\Sigma + \overline \Sigma - 2 (\overline \Sigma^{1/2} \Sigma \overline \Sigma^{1/2})^{1/2}\}.
\end{align}
If $k = O(1)$, then Lemma \ref{mean-diff} and Lemma \ref{var-diff} together imply that the expression on the right is $o(n^{-1})$ in $P_{\theta_0}^n$-probability. 
The theorem is proved.

\section{Proof of Theorem 2}

We begin by observing that by the definition of the Wasserstein distance we have,
\begin{align}\label{msplit}
W_2^2( \tilde{\overline \Pi}, \hat{\overline \Pi})&\leq \frac{1}{k} \sum\limits_{j=1}^k \frac{1}{T} \sum\limits_{i=1}^T \Vert a_j + B_j \theta'_{(j-1)T+i}\Vert^2\\
&\leq \frac{2}{k}\sum_{j=1}^k \Vert a_j \Vert^2 +\frac{2}{kT}\sum_{j=1}^k\sum_{i=1}^T \Vert B_j \theta'_{(j-1)T+i}\Vert ^2\\
&\leq \frac{2}{k}\sum_{j=1}^k \Vert a_j \Vert^2 + 2 \max_{1\leq j \leq k}\Vert B_j \Vert^2 \frac{1}{kT}\sum_{i=1}^{kT} \Vert \theta'_{i}\Vert ^2
\end{align}
where $\|B\|$ for a matrix $B$ denotes the operator norm, 
\[
a_j:=\hat{\overline{\mu}} -{\overline{\mu}} + \hat{\overline{\Sigma}}^{\frac{1}{2}} \hat\Sigma_{(j)}^{-{1}/{2}}(\mu_{(j)}-\hat{\mu}_{(j)}), \hbox{ and } B_j:=\hat{\overline{\Sigma}}^{\frac{1}{2}} \hat\Sigma_{(j)}^{-1/2}\Sigma_{(j)}^{\frac{1}{2}}-{\overline{\Sigma}}^{1/2}.
\]
Note that by the law of large numbers, in view of the above, it suffices to show that 
\begin{equation}\label{shy:eqn1}
a_j=o_{Q_{}}(n^{-1/2}), \quad \hbox{and }  \quad \Vert B_j\Vert =o_{Q_{}}(n^{-1/2}).    
\end{equation}

We begin by establishing the former statement, and towards this note that since $\Vert \hat{{\mu}}_{(j)} -{{\mu}}_{(j)} \Vert=O_{Q_{}}(T^{-1/2})$, we have 
\[
\Vert \hat{\overline{\mu}} -{\overline{\mu}} \Vert \leq \frac{1}{k}\sum\limits_{j=1}^k \Vert \hat{{\mu}}_{(j)} -{{\mu}}_{(j)} \Vert=O_{Q_{}}(T^{-1/2}).  
\]
Moreover, we note  that
\begin{align*}
    \Vert \hat{\overline{\Sigma}}^{\frac{1}{2}} \hat\Sigma_{(j)}^{-1/2} \Vert &\leq 
    \Vert n\hat{\overline{\Sigma}} \Vert^{\frac{1}{2}} \Vert (n\hat\Sigma_{(j)})^{-1} \Vert^{1/2}\\
    &\leq \left(\sqrt{\frac{1}{k} \sum\limits_{j=1}^k \Vert n\hat{{\Sigma}}_{(j)}\Vert } \right)\Vert (n\hat\Sigma_{(j)})^{-1} \Vert^{1/2} \quad \hbox{(Theorem 9 of \citet{Bhatia18})}\\
    &= O_{Q_{}}(1).
\end{align*}
Combining the previous two observations, we have
\begin{align*}
\Vert a_j \Vert^2 &\leq  2\Vert \hat{\overline{\mu}} -{\overline{\mu}} \Vert^2 + 2\Vert \hat{\overline{\Sigma}}^{\frac{1}{2}} \hat\Sigma_{(j)}^{-{1}/{2}}(\mu_{(j)} -\hat{\mu}_{(j)})\Vert^2\\
&\leq O_{Q_{}}(T^{-1}) + 2 \Vert \hat{\overline{\Sigma}}^{\frac{1}{2}} \hat\Sigma_{(j)}^{-{1}/{2}} \Vert^2 \Vert \mu_{(j)} -\hat{\mu}_{(j)} \Vert^2 \\
&=O_{Q_{}}(T^{-1}) + O_{Q_{}}(1)O_{Q_{}}(T^{-1})= O_{Q_{}}(T^{-1})=o_{Q_{}}(n^{-1/2}).
\end{align*}

Now we establish the second statement of \eqref{shy:eqn1}. Towards this end we note that, 
\begin{align}
\Vert B_j\Vert 
&\leq n^{-1/2} 
\left( 
\Vert I_{\theta_0}^{-1/2} - (n{\overline{\Sigma}})^{1/2}\Vert  
+ \Vert I_{\theta_0}^{-1/2}\hat\Sigma_{(j)}^{-1/2}\Sigma_{(j)}^{\frac{1}{2}} - I_{\theta_0}^{-1/2} \Vert 
+\Vert(n\hat{\overline{\Sigma}})^{\frac{1}{2}} \hat\Sigma_{(j)}^{-1/2}\Sigma_{(j)}^{\frac{1}{2}}-I_{\theta_0}^{-1/2}\hat\Sigma_{(j)}^{-1/2}
\Sigma_{(j)}^{\frac{1}{2}}\Vert \right)
\nonumber \\
&\leq n^{-1/2} \left( 
\Vert I_{\theta_0}^{-1/2} - (n{\overline{\Sigma}})^{1/2}\Vert 
+ \Vert I_{\theta_0}^{-1/2}\Vert \Vert\hat\Sigma_{(j)}^{-1/2}\Sigma_{(j)}^{\frac{1}{2}} - I \Vert
+\Vert(n\hat{\overline{\Sigma}})^{\frac{1}{2}} -I_{\theta_0}^{-1/2}\Vert\Vert\hat\Sigma_{(j)}^{-1/2}\Sigma_{(j)}^{\frac{1}{2}}\Vert
\right), \label{shy:eqn2} 
\end{align}
where we have used $I_{\theta_0}$ to denote the Fisher information matrix. Note that it suffices to show that the term within parenthesis in \eqref{shy:eqn2} is $o_{_{Q_{}}}(1)$. For the first term we observe using Lemma \ref{slemma}, Lemma \ref{shy:lem2}, and  \eqref{shy:eqn3} that, 
\begin{align}
\Vert (n{\overline{\Sigma}})^{\frac{1}{2}}-I_{\theta_0}^{-1/2}\Vert  \leq  \sqrt{\Vert(n{\overline{\Sigma}})-I_{\theta_0}^{-1}\Vert} & \leq \sqrt{\Vert(n{\overline{\Sigma}})-I_{\theta_0}^{-1}\Vert_F}\nonumber \\
&\leq \sqrt{d((n{\overline{\Sigma}}),I_{\theta_0}^{-1}) \left[ \sqrt{\tr(n{\overline{\Sigma}}) } + \sqrt{\tr(I_{\theta_0}^{-1})}\right]}\nonumber \\
&\leq \sqrt{d((n{\overline{\Sigma}}),I_{\theta_0}^{-1})} \sqrt{
\left[ \sqrt{\frac{1}{k}\sum_{j=1}^k\tr(n{\Sigma}_{(j)}) } + \sqrt{\tr(I_{\theta_0}^{-1})}\right]}\nonumber \\
&= o_{{Q_{}}}(1) \times O_{_{Q_{}}}(1)=o_{_{Q_{}}}(1). \label{shy:eqn4}
\end{align}
For the second term within parenthesis in \eqref{shy:eqn2}, we note that our Assumption 5 implies that
\[
\Vert\hat\Sigma_{(j)}^{-1/2}\Sigma_{(j)}^{\frac{1}{2}} - I \Vert \leq \sqrt{\Vert \hat\Sigma_{(j)}^{(-1)}\Vert} \Vert\Sigma_{(j)}^{\frac{1}{2}} - \hat\Sigma_{(j)}^{1/2} \Vert \leq 
\sqrt{\Vert (n\hat\Sigma_{(j)})^{(-1)}\Vert} \sqrt{\Vert n\Sigma_{(j)} - n\hat\Sigma_{(j)} \Vert_F}=o_Q(1),
\]
where $\| B\|_F$ is the Frobenius norm of the matrix $B$. 
For the last term within parenthesis in \eqref{shy:eqn2}, an argument mimicking that in \eqref{shy:eqn4} and using Assumption 5 confirms that it is of order $o_Q(1)$. 

\section{Technical Lemmas}
\label{sec:technical-lemmas}

The following lemma states that asymptotic order of the first term on the right hand side in \eqref{eq:s1e1}. 
\begin{lemma} \label{mean-diff}
  Let $\mu$ and $\overline \mu$ be the means of $\Pi(\cdot \mid \Dcal)$ and $\overline \Pi(\cdot \mid \Dcal)$. If Assumptions 1--4 hold and $k = O(1)$, then as $n, m \rightarrow \infty$
  \begin{align*}
    \|\mu - \overline \mu  \|_2^2 = o(n^{-1}) \text{ in }     P_{\theta_0}^n\text{-probability}.
  \end{align*}
\end{lemma}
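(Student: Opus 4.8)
The plan is to reduce the statement to a comparison of maximum likelihood estimators. Using the Laplace approximation I would first replace each posterior mean by the corresponding MLE up to a bias of order $n^{-1}$, and then exploit the fact that the subsets form a partition of the full data to show that the average of the subset MLEs agrees with the full-data MLE up to an error of order $o(n^{-1/2})$; squaring then yields the claim.

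\emph{Step 1: posterior means versus MLEs.} Under the regularity conditions of Assumption 3, the standard higher-order expansion of a posterior mean (see \citet{Kasetal90}) gives $\mu = \hat\theta_n + O(n^{-1})$ in $P_{\theta_0}^n$-probability. The $j$th subset posterior is proportional to $\{\ell(\theta_{(j)})\}^{n/m}\pi(\theta_{(j)})$, i.e.\ its log-density is $k\sum_{i\in S_j}\log p_\theta(y_i) + \log\pi(\theta)$, where $S_j$ indexes subset $j$; its curvature at the mode is of order $km=n$, and the subset log-likelihood is of the same form as $\ell_n$ and hence satisfies the analogous regularity conditions, so the same expansion yields $\mu_{(j)} = \hat\theta_{(j)} + O(n^{-1})$ in $P_{\theta_0}^n$-probability, where $\hat\theta_{(j)}$ maximizes $\ell(\theta_{(j)})$ — equivalently, the powered likelihood. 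Since $k=O(1)$ by Assumption 4, averaging over $j$ preserves this order, so $\overline\mu = \tfrac1k\sum_{j=1}^k\hat\theta_{(j)} + O(n^{-1})$ and hence $\mu - \overline\mu = \big(\hat\theta_n - \tfrac1k\sum_{j=1}^k\hat\theta_{(j)}\big) + O(n^{-1})$ in $P_{\theta_0}^n$-probability.

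\emph{Step 2: averaging the subset MLEs.} Let $\dot\ell_{\theta_0}$ be the per-observation score and $I_{\theta_0}$ the Fisher information. By standard maximum-likelihood asymptotics under Assumptions 1 and 3 one has the linearizations $\hat\theta_n - \theta_0 = I_{\theta_0}^{-1}\tfrac1n\sum_{i=1}^n\dot\ell_{\theta_0}(y_i) + o(n^{-1/2})$ and $\hat\theta_{(j)} - \theta_0 = I_{\theta_0}^{-1}\tfrac1m\sum_{i\in S_j}\dot\ell_{\theta_0}(y_i) + o(m^{-1/2})$, both in $P_{\theta_0}^n$-probability. Because the $S_j$ are disjoint with $\bigcup_{j=1}^k S_j = \{1,\dots,n\}$ and $km=n$, one has $\tfrac1k\sum_{j=1}^k\tfrac1m\sum_{i\in S_j}\dot\ell_{\theta_0}(y_i) = \tfrac1n\sum_{i=1}^n\dot\ell_{\theta_0}(y_i)$, so the leading score terms cancel exactly in $\hat\theta_n - \tfrac1k\sum_{j=1}^k\hat\theta_{(j)}$, leaving an error of order $o(m^{-1/2})$, which is $o(n^{-1/2})$ since $n=km$ with $k=O(1)$. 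Combining with Step 1 gives $\mu - \overline\mu = o(n^{-1/2})$ in $P_{\theta_0}^n$-probability, hence $\|\mu-\overline\mu\|_2^2 = o(n^{-1})$ in $P_{\theta_0}^n$-probability.

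\emph{Where the difficulty lies.} The manipulation in Step 2 is essentially bookkeeping once one observes that the subsets partition the data; the substantive work is in Step 1, namely making the Laplace expansion rigorous at the $O(n^{-1})$ level. Concretely, one uses Assumption 3(c) to bound the contribution of the complement of a shrinking ball around $\hat\theta_n$ to both the numerator of $\mu$ and to the normalizing constant, and then tracks the third-order Taylor term of $\ell_n$ together with the prior gradient at $\hat\theta_n$ to verify that the residual bias is genuinely of stochastic order $n^{-1}$; the bound on derivatives up to order $6$ in Assumption 3(a) is precisely what makes these estimates go through. A minor additional point is that these estimates must hold simultaneously for the full-data posterior and for all $k$ subset posteriors, but since $k=O(1)$ this is a finite union and the uniform constants $\epsilon, M, \eta, n_0$ supplied by Assumption 3 suffice.
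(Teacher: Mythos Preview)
Your argument is correct and essentially fills in what the paper leaves implicit: the paper's own proof of this lemma is a one-line deferral to Theorem~1 of \citet{XueLia17}, whereas you have written out the underlying mechanism explicitly. The two-step reduction you propose---replace posterior means by MLEs via the Laplace expansion with bias $O(n^{-1})$, then cancel the leading score terms because the subsets partition the full sample---is precisely the standard route, and is what the cited result relies on as well. In that sense your approach and the paper's are the same, but you have done the work of making the argument self-contained rather than outsourcing it.

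Two small remarks on presentation. First, in Step~1 you assert that the subset powered likelihood ``is of the same form as $\ell_n$ and hence satisfies the analogous regularity conditions''; strictly speaking Assumption~3 is stated only for the full-data log-likelihood, so you are implicitly using the i.i.d.\ structure of Assumption~1 together with $k=O(1)$ to transfer the bounds (a)--(c) to each subset. This is fine but worth saying once. Second, when you average the $k$ remainder terms $o_{P}(m^{-1/2})$ in Step~2, the conclusion $o_{P}(n^{-1/2})$ relies on $k=O(1)$ so that $m\asymp n$; you note this, but it is the only place where the bounded-$k$ assumption is genuinely load-bearing in your argument (in Step~1 the bias is already $O(n^{-1})$ regardless of $k$ because of the stochastic approximation powering), so it may be worth flagging that this is where a growing-$k$ extension would require more care.
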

\begin{proof}
  The proof follows from the proof Theorem 1 in \citet{XueLia17} because our assumptions include all the regularity assumptions required for Theorem 1 in \citet{XueLia17} to hold. 
\end{proof}

In the following, we define $d(\cdot,\cdot)$ as 
\[
d(A,B):=\sqrt{\tr\left(A+B-2(A^{1/2}BA^{1/2})^{1/2}\right)},
\]
where $A$ and $B$ are two $p\times p$ positive semidefinite matrices. In \citet{Bhatia18} (see page 3 therein) it is shown that $d(\cdot,\cdot)$ defines a metric on the space of positive semidefinite matrices.  By the Wasserstein mean of $K$ positive semidefinite matrices $A_{k}$, $k=1,\ldots,K$,  we mean the the variance-covariance matrix of the Wasserstein barycenter of $N(0,A_k)$, $k=1,\ldots,K$.    

\begin{lemma}\label{slemma}
Let $A_{k}$, $k=1,\ldots,K$ be a sequence of $p\times p$ positive definite matrices, and let their Wasserstein mean be denoted by $\bar{A}$. Then for another positive definite matrix $A_0$ we have, 
\begin{equation}\label{sineq}
d(\bar{A},A_0) \leq 2 \sqrt{\frac{p}{K} \sum_{k=1}^K \Vert A_k-A_0\Vert}  \leq 2 \sqrt{\frac{p}{K} \sum_{k=1}^K \Vert A_k-A_0\Vert_{F}}.  
\end{equation}

\end{lemma}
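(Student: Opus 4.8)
The plan rests on two ingredients: the variational (barycenter) characterization of the Wasserstein mean $\bar A$, and a pointwise upper bound on $d$ in terms of the operator norm of the matrix difference.

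First I would record that, by definition, $\bar A$ is the covariance matrix of the Wasserstein barycenter of $N(0,A_1),\dots,N(0,A_K)$ with equal weights $1/K$; since all these Gaussians are centered, this barycenter is exactly the minimizer over positive semidefinite $B$ of $B\mapsto \frac1K\sum_{k=1}^K d^2(B,A_k)$, which exists and is unique by \citet{AguCar11}. In particular $\frac1K\sum_{k=1}^K d^2(\bar A,A_k)\le \frac1K\sum_{k=1}^K d^2(A_0,A_k)$. Since $d$ is a metric, for every $k$ the triangle inequality gives $d(\bar A,A_0)\le d(\bar A,A_k)+d(A_k,A_0)$; averaging over $k$ and applying Cauchy--Schwarz to each of the two resulting averages yields
\[
 d(\bar A,A_0)\ \le\ \Big(\tfrac1K\sum_{k=1}^K d^2(\bar A,A_k)\Big)^{1/2}+\Big(\tfrac1K\sum_{k=1}^K d^2(A_0,A_k)\Big)^{1/2}\ \le\ 2\Big(\tfrac1K\sum_{k=1}^K d^2(A_0,A_k)\Big)^{1/2},
\]
where the last step uses the minimizing property just noted.

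Second, I would bound each term $d^2(A_0,A_k)$ by $p\,\|A_k-A_0\|$. Writing $d^2(A,B)=\tr A+\tr B-2\tr(A^{1/2}BA^{1/2})^{1/2}$ and using $\tr(A^{1/2}BA^{1/2})^{1/2}=\|A^{1/2}B^{1/2}\|_1\ge \tr(A^{1/2}B^{1/2})$ (equivalently, the Procrustes form $d^2(A,B)=\min_{U\in O(p)}\|A^{1/2}-B^{1/2}U\|_F^2$ evaluated at $U=I$), one gets $d^2(A_0,A_k)\le\|A_0^{1/2}-A_k^{1/2}\|_F^2\le p\,\|A_0^{1/2}-A_k^{1/2}\|^2$, the last step being $\|M\|_F^2\le p\|M\|^2$ for a $p\times p$ matrix $M$. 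Then the operator-norm H\"older bound for the matrix square root, $\|A_0^{1/2}-A_k^{1/2}\|\le\|A_0-A_k\|^{1/2}$ (valid for positive semidefinite matrices), gives $d^2(A_0,A_k)\le p\,\|A_k-A_0\|$. Substituting into the display above yields the first inequality of \eqref{sineq}, and the second is immediate from $\|\cdot\|\le\|\cdot\|_F$.

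The only genuinely non-routine step is the second one, and the point to watch is that the square-root H\"older bound $\|A^{1/2}-B^{1/2}\|\le\|A-B\|^{1/2}$ must be applied in the operator norm — it fails for the Frobenius norm, which is exactly why the dimensional factor $\sqrt p$ is unavoidable once one insists on expressing the right-hand side through $\|A_k-A_0\|$ rather than through $\|A_k^{1/2}-A_0^{1/2}\|$. All three matrix facts invoked (that $d$ is a metric, the trace/Procrustes identity for $d^2$, and the square-root H\"older inequality) are standard and can be quoted from \citet{Bhatia18}, so beyond assembling this chain no further estimation is required.
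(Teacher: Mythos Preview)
Your proof is correct and follows essentially the same route as the paper's: both use the barycenter minimizing property together with the triangle inequality to reduce to bounding $\frac{1}{K}\sum_k d^2(A_0,A_k)$, then chain $d^2(A_0,A_k)\le\|A_0^{1/2}-A_k^{1/2}\|_F^2\le p\|A_0^{1/2}-A_k^{1/2}\|^2\le p\|A_0-A_k\|$ via the Procrustes identity and the operator-monotone square-root bound (the paper cites Theorem~X.1.1 of \citet{Bhatia13} for the latter). The only cosmetic difference is that the paper squares the triangle inequality and uses $(a+b)^2\le 2(a^2+b^2)$ where you average first and apply Cauchy--Schwarz, yielding the same constant.
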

\begin{proof}
By the definition of $\bar{A}$, or see (57) in \citet{Bhatia18}, we have
\[
\bar{A}:=\argmin_{X\succ 0} \sum_{k=1}^K d^2(X,A_k).
\]
This implies that 
\[
\frac{1}{K}\sum_{k=1}^K d^2(\bar{A},A_k)\leq\frac{1}{K}\sum_{k=1}^K d^2(A_0,A_k).     
\]
Now we have by use of the triangle inequality and the AM-GM inequality that 
\begin{align}
d^2(\bar{A},A_0)&\leq 2\left[ \frac{1}{K}\sum_{k=1}^K d^2(\bar{A},A_k) + \frac{1}{K}\sum_{k=1}^K d^2(A_0,A_k)\right]\\
&\leq \frac{4}{K} \sum_{k=1}^K d^2(A_0,A_k)\leq \frac{4}{K} \sum_{k=1}^K \Vert A_0^{1/2} -A_k^{1/2}\Vert_{F}^2,\\
&\leq \frac{4p}{K} \sum_{k=1}^K \Vert A_0^{1/2} -A_k^{1/2}\Vert^2,
\end{align}
where the last inequality follows from Theorem 1 of \citet{Bhatia18}. Now since $\sqrt{\cdot}$ is operator monotone, using Theorem X.1.1 of \citet{Bhatia13} with the above inequality yields the first inequality of \eqref{sineq}. The final inequality of \eqref{sineq} follows by the fact that the Frobenius norm upper bounds the operator norm.  
\end{proof}

\begin{lemma}\label{shy:lem2}
 For two $p\times p$ positive semi-definite matrices $A$ and $B$, we have
\[
\Vert A -B\Vert_{F}\leq d(A,B) \left( \sqrt{\tr[A]}+\sqrt{\tr[B]}\right)
\]
\end{lemma}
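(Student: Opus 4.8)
The plan is to reduce the bound to a short matrix manipulation by invoking the variational (Bures) description of the metric $d$: for positive definite $A,B$,
\[
d(A,B)=\min_{U}\Vert A^{1/2}-B^{1/2}U\Vert_F ,
\]
the minimum running over $p\times p$ orthogonal matrices $U$ and being attained by compactness of the orthogonal group. This identity is classical (see \citet{Bhatia18}); if one prefers a self-contained derivation, expand $\Vert A^{1/2}-B^{1/2}U\Vert_F^2=\tr A+\tr B-2\tr(A^{1/2}B^{1/2}U)$ and maximize the last term over orthogonal $U$, which equals the nuclear norm $\Vert A^{1/2}B^{1/2}\Vert_{*}=\tr\big((A^{1/2}BA^{1/2})^{1/2}\big)$ and is attained via the singular value decomposition of $A^{1/2}B^{1/2}$; subtracting recovers $d(A,B)^2$. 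Fix a minimizer $U_0$ and set $C:=B^{1/2}U_0$ and $R:=A^{1/2}-C$. Then $\Vert R\Vert_F=d(A,B)$, while $CC^{\T}=B^{1/2}U_0U_0^{\T}B^{1/2}=B$ and $\Vert C\Vert_F=\Vert B^{1/2}\Vert_F=\sqrt{\tr B}$ by orthogonal invariance of the Frobenius norm.

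The key step is the algebraic identity
\[
A-B \;=\; A^{1/2}A^{1/2}-CC^{\T} \;=\; A^{1/2}\big(A^{1/2}-C^{\T}\big)+\big(A^{1/2}-C\big)C^{\T} \;=\; A^{1/2}R^{\T}+R\,C^{\T},
\]
where the last equality uses that $A^{1/2}$ is symmetric, so $A^{1/2}-C^{\T}=(A^{1/2}-C)^{\T}=R^{\T}$. I would then apply the mixed submultiplicativity inequalities $\Vert XY\Vert_F\le\Vert X\Vert_F\Vert Y\Vert$ and $\Vert XY\Vert_F\le\Vert X\Vert\,\Vert Y\Vert_F$ (operator norm on the appropriate factor), together with $\Vert R\Vert\le\Vert R\Vert_F$ and the identities from the previous paragraph:
\[
\Vert A-B\Vert_F\le\Vert A^{1/2}R^{\T}\Vert_F+\Vert R\,C^{\T}\Vert_F\le\Vert R\Vert\big(\Vert A^{1/2}\Vert_F+\Vert C\Vert_F\big)\le d(A,B)\big(\sqrt{\tr A}+\sqrt{\tr B}\big),
\]
which is the claimed inequality. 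The positive semidefinite case then follows from the positive definite one by applying the bound to $A+\varepsilon I$ and $B+\varepsilon I$ and letting $\varepsilon\downarrow 0$, since every quantity involved is continuous in the matrix entries.

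The only nontrivial input is the variational formula for $d$ and the resulting existence of an orthogonal factor $U_0$ with $A^{1/2}=B^{1/2}U_0+R$ and $\Vert R\Vert_F=d(A,B)$; granting that, the rest is two lines. The subtlety to watch is that $C=B^{1/2}U_0$ is \emph{not} symmetric, so the splitting of $A-B$ must be organized so that the transpose that appears is $R^{\T}$, produced by the symmetry of $A^{1/2}$ alone — any grouping that relied on symmetry of $C$ would fail because $C^2\ne B$ in general, whereas $CC^{\T}=B$ always. A mirror-image splitting (peeling $B^{1/2}$ off on the right) yields the same bound with the roles of $A$ and $B$ interchanged; neither form dominates, but either suffices for the application.
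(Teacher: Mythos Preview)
Your proof is correct and follows essentially the same route as the paper's: both write $A=MM^{\T}$, $B=NN^{\T}$ for suitable factors (the paper over all $M\in\mathcal{F}(A)$, $N\in\mathcal{F}(B)$; you with $M=A^{1/2}$, $N=B^{1/2}U_0$), split $A-B=M(M-N)^{\T}+(M-N)N^{\T}$, bound by submultiplicativity, and invoke the variational characterization of $d$ from \citet{Bhatia18}. The only cosmetic differences are that the paper uses Frobenius submultiplicativity $\Vert XY\Vert_F\le\Vert X\Vert_F\Vert Y\Vert_F$ directly (rather than routing through the operator norm) and handles the semidefinite case without a limiting argument, since its factorization framework and the cited Theorem~1 already cover positive semidefinite matrices.
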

\begin{proof}
Let us define for any positive semi-definite $p\times p$ matrix $C$, 
\begin{equation*}
{\cal F}(C):=\{M_{p\times p}: C=MM^\T \}.
\end{equation*}
Let M,N be members of ${\cal F}(A)$ and ${\cal F}(B)$, respectively. Then we have,
\begin{align*}
\Vert A - B\Vert_F &= \Vert MM^\T - NN^\T \Vert_F    \\
&= \Vert MM^\T - MN^\T + MN^\T - - NN^\T \Vert_F    \\
&\leq \Vert M\Vert_F \Vert M^\T-N^\T\Vert_F + \Vert N^\T \Vert_F  \Vert M-N\Vert_F\\
&= \Vert M-N\Vert_F (\Vert M\Vert_F + \Vert N\Vert_F)\\
&= \Vert M-N\Vert_F \left(\sqrt{\tr(A)}+\sqrt{\tr(B)}\right)
\end{align*}
Using the above with Theorem 1 of \citet{Bhatia18} yields,
\begin{align*}
\Vert A - B\Vert_F &\leq \left(\sqrt{\tr(A)}+\sqrt{\tr(B)}\right) \min_{M\in{\cal F}(A);N\in{\cal F}(B)}\Vert M-N\Vert_F \\
&=d(A,B)\left(\sqrt{\tr(A)}+\sqrt{\tr(B)}\right).
\end{align*}

\end{proof}

The following lemma states that asymptotic order of the second term on the right hand side in \eqref{eq:s1e1}. 
\begin{lemma} \label{var-diff}
  Let $\Sigma$ and $\overline \Sigma$ be the covariance matrices of $\Pi(\cdot \mid \Dcal)$ and $\overline \Pi(\cdot \mid \Dcal)$. If Assumptions 1--4 hold, then as $n, m \rightarrow \infty$
  \begin{align*}
    d^2(\Sigma,\overline \Sigma) = o_{P^n_{\theta_0}}(n^{-1}).
  \end{align*}
\end{lemma}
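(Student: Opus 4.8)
The plan is to reduce the claim, using the behaviour of the Bures--Wasserstein metric $d$ and of the Wasserstein mean under scaling, to the statement that the rescaled covariance matrices $n\Sigma$ and $n\Sigma_{(1)},\dots,n\Sigma_{(k)}$ all converge in $P^n_{\theta_0}$-probability to one and the same deterministic positive definite matrix, namely the inverse Fisher information $I_{\theta_0}^{-1}$; once that is in hand, Lemma \ref{slemma} together with the triangle inequality for $d$ finishes the argument.

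The first step is to establish this concentration. Let $\hat I_n = -n^{-1}D^2\ell_n(\hat\theta_n)$, and let $\hat I_{(j)} = -m^{-1}D^2\ell_{(j)}(\hat\theta_{(j)})$ be the corresponding quantity built from the subset-$j$ log-likelihood $\ell_{(j)}$; by Assumption 1 and the law of large numbers, $\hat I_n \to I_{\theta_0}$ and $\hat I_{(j)}\to I_{\theta_0}$ in $P^n_{\theta_0}$-probability. Under the Laplace regularity conditions of Assumption 3, and exactly as in the proof of Lemma \ref{mean-diff} (cf.\ Theorem 1 of \citet{XueLia17} and \citet{Kasetal90}), the full-data posterior covariance obeys $n\Sigma = \hat I_n^{-1} + o_{P^n_{\theta_0}}(1)$. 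On subset $j$ the stochastic-approximation power $n/m = k$ makes the effective log-likelihood $k\,\ell_{(j)}$, whose Hessian at its maximiser equals $-km\,\hat I_{(j)} = -n\,\hat I_{(j)}$ up to the lower-order contribution of the fixed prior, so the same Laplace expansion gives $n\Sigma_{(j)} = \hat I_{(j)}^{-1} + o_{P^n_{\theta_0}}(1)$. Hence $n\Sigma$ and every $n\Sigma_{(j)}$ converge to $I_{\theta_0}^{-1}$ in $P^n_{\theta_0}$-probability.

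Next I would record two elementary facts. From the definition of $d$, $d^2(cA,cB) = c\,d^2(A,B)$ for $c>0$ and positive semidefinite $A,B$, since $(cA)^{1/2}(cB)(cA)^{1/2} = c^2A^{1/2}BA^{1/2}$. And scaling commutes with the Wasserstein mean: if $\overline\Sigma$ is the limit of the iteration \eqref{eq:wa2} then $n\overline\Sigma$ is the Wasserstein mean of $n\Sigma_{(1)},\dots,n\Sigma_{(k)}$, by uniqueness of the mean and the fact that scaling all the $\Sigma_{(j)}$ by $n$ scales both sides of the barycentric fixed-point equation by $n$. It therefore suffices to bound $d^2(n\Sigma, n\overline\Sigma)$. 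Applying Lemma \ref{slemma} with $A_0 = I_{\theta_0}^{-1}$, $A_j = n\Sigma_{(j)}$ and mean $n\overline\Sigma$ gives $d(n\overline\Sigma, I_{\theta_0}^{-1}) \le 2\sqrt{(p/k)\sum_{j=1}^k\|n\Sigma_{(j)} - I_{\theta_0}^{-1}\|}$, and the same lemma applied to a single matrix gives $d(n\Sigma, I_{\theta_0}^{-1}) \le 2\sqrt{p\,\|n\Sigma - I_{\theta_0}^{-1}\|}$; since $k = O(1)$ and $p$ is fixed, the previous paragraph makes both right-hand sides $o_{P^n_{\theta_0}}(1)$. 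By the triangle inequality for the metric $d$, $d(n\Sigma, n\overline\Sigma) \le d(n\Sigma, I_{\theta_0}^{-1}) + d(I_{\theta_0}^{-1}, n\overline\Sigma) = o_{P^n_{\theta_0}}(1)$, whence $d^2(\Sigma,\overline\Sigma) = n^{-1}d^2(n\Sigma, n\overline\Sigma) = o_{P^n_{\theta_0}}(n^{-1})$, as claimed.

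I expect the main obstacle to be the first step: making the limits $n\Sigma\to I_{\theta_0}^{-1}$ and $n\Sigma_{(j)}\to I_{\theta_0}^{-1}$ rigorous. This is where Assumption 3 genuinely enters, through the higher-order Laplace expansion of posterior moments in \citet{Kasetal90}, and one has to track the power $n/m=k$ carefully so that the subset posteriors concentrate on the same $n^{-1}$ scale as the full-data posterior rather than on the $m^{-1}$ scale. The remaining ingredients — the scaling identity for $d$, the commutation of scaling with the Wasserstein mean, and the two applications of Lemma \ref{slemma} — are routine once that concentration is available.
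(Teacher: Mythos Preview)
Your proposal is correct and follows essentially the same approach as the paper: both arguments use the Laplace expansion (Assumption 3, \citet{Kasetal90}) to show $n\Sigma$ and each $n\Sigma_{(j)}$ concentrate at $I_{\theta_0}^{-1}$, invoke the scaling identity $d^2(\Sigma,\overline\Sigma)=n^{-1}d^2(n\Sigma,n\overline\Sigma)$, and then apply Lemma \ref{slemma}. The only cosmetic difference is that the paper applies Lemma \ref{slemma} once with $A_0=n\Sigma$ directly (using $\|n\Sigma_{(j)}-n\Sigma\|_F=o_{P^n_{\theta_0}}(1)$), whereas you route through the deterministic point $I_{\theta_0}^{-1}$ and close with the triangle inequality for $d$; this saves you nothing and costs you nothing.
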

\begin{proof}
  Let $\Sigma_{(j)} = \var(\theta_{(j)} \mid \Dcal_{(j)})$. Assumption 3, the existence of moment generating function, and Theorem 4 in \citet{Kasetal90} imply via the Laplace approximation of the posterior and the subset posteriors that
  \begin{align}
    \label{eq:l2e1}
    \Sigma_{(j)} = \frac{I_{jm}^{-1}}{n} + O_{P^m_{\theta_0}}(n^{-2}), \quad \hbox{and}\quad 
    \Sigma = \frac{I_n^{-1}}{n} + O_{P^n_{\theta_0}}(n^{-2}), 
  \end{align}
  where $I_{jm}$ and $I_{n}$ are the Fisher information matrices evaluated at the  maximum likelihood estimators computed using subset $j$ and full data, respectively.
  Since the maximum likelihood estimates are consistent estimates of $\theta_0$ and matrix inversion is a continuous operator on the subspace of invertible matrices, we have 
  \begin{align*}
  I_{jm}^{-1} = I_{\theta_0}^{-1} + o_{P^m_{\theta_0}}(1), \quad \hbox{and}\quad I_n^{-1} = I_{\theta_0}^{-1} + o_{P^n_{\theta_0}}(1).
  \end{align*}
  Combining the above observations we have 
  \begin{align}\label{shy:eqn3}
        n\Sigma_{(j)}-I_{\theta_0}^{-1} = o_{P^m_{\theta_0}}(1), \quad n\Sigma -I_{\theta_0}^{-1}=o_{P^n_{\theta_0}}(1), \quad  \hbox{and} \quad n\Sigma_{(j)} -  n\Sigma = o_{P^m_{\theta_0}}(1). 
\end{align}
Now using Lemma \ref{slemma} we have 
\begin{align*}
nd^2(n\Sigma,n\overline \Sigma)&=d^2(n\Sigma,n\overline \Sigma)\\
&\leq \frac{4p}{K} \sum_{k=1}^K \Vert n\Sigma_{(j)}-n\Sigma\Vert_{F}= o_{P^n_{\theta_0}}(n^{-1}).  
\end{align*}

\end{proof}

\bibliographystyle{Chicago}
\bibliography{papers}

\end{document}